\definecolor{shadecolor}{rgb}{1,0.8,0.3}
\newdefinition{remark}{Remark}
\newtheorem{lemma}{Lemma}
\newtheorem{assumption}{Assumption}
\newtheorem{theorem}{Theorem }
\newproof{proof}{Proof}
\journal{Engineering Applications of Artificial Intelligence, vol. 62, pp. 276-285, 2017.}
\begin{document}
\begin{frontmatter}
\title{A Self-Learning Disturbance Observer for Nonlinear Systems in Feedback-Error Learning Scheme}
\author[UIUC]{Erkan Kayacan}
\author[ISU]{Joshua M. Peschel}
\author[UIUC]{Girish Chowdhary}
\address[UIUC]{Coordinated Science Laboratory, University of Illinois at Urbana-Champaign, \\ 1206 W Gregory Drive, Urbana, Illinois 61801, USA. Tel. +1 217 721 8278 \\
e-mail: \{ erkank, girishc \}@illinois.edu }
\address[ISU]{Department of Agricultural and Biosystems Engineering, Iowa State University, Ames, Iowa 50011, USA \\
e-mail: peschel@iastate.edu }

\begin{abstract}
This paper represents a novel online self-learning disturbance observer (SLDO) by benefiting from the combination of a type-2 neuro-fuzzy structure (T2NFS), feedback-error learning scheme and sliding mode control (SMC) theory. The SLDO is developed within a framework of feedback-error learning scheme in which a conventional estimation law and a T2NFS work in parallel. In this scheme, the latter
learns uncertainties and becomes the leading estimator whereas the former provides the learning error to the T2NFS for learning system dynamics. A learning algorithm established on SMC theory is derived for an interval type-2 fuzzy logic system. In addition to the stability of the learning algorithm, the stability of the SLDO and the stability of the overall system are proven in the presence of time-varying disturbances. Thanks to learning process by the T2NFS, the simulation results show that the SLDO is able to estimate time-varying disturbances precisely as distinct from the basic nonlinear disturbance observer (BNDO) so that the controller based on the SLDO ensures robust control performance for systems with time-varying uncertainties, and maintains nominal performance in the absence of uncertainties.
\end{abstract}
\begin{keyword}
Disturbance observer, neural networks, neuro-fuzzy structure, online learning algorithm, robustness, sliding mode control, type-2 fuzzy logic systems, uncertainty.
\end{keyword}
\end{frontmatter}

\section{Introduction}

One of the most essential requirements for controllers is to be insensitive to uncertainties. Many control methods have been proposed to handle different types of uncertainties, (e.g. $H_{\infty}$ control \citep{Glover1988,Gahinet1994}, sliding mode control (SMC) \citep{slotine1991,Hongyi2013,Xu2014}, adaptive control \citep{Weichao2013,Weichao2013ie,Wei2014,Jianyong2015}, etc...). In $H_{\infty}$ control, uncertainties must be bounded in $H_{\infty}$-norm. This implies that disturbances must disappear suddenly and completely. However, this is not a realistic assumption for real-time applications. In SMC theory, integral SMC controller is proposed in the presence of uncertainties. It is a well known fact that the integral action may cause unwanted effects such as, large settling time and overshoots. Moreover, adaptive control systems may not have an ability to control uncertain systems with highly changing parameters \citep{Chen2015123}. As an  alternative method, disturbance observers (DOs) have been proposed since they are very crucial for control of systems due to the fact that uncertainties extensively exist in practice and are extremely difficult to be modeled. These uncertainties, such as parameter variations, noise, unmodeled dynamics and interactions between subsystems, must be taken for the controller design into account to have a capability of getting robustness. For this purpose, different DOs have been designed in literature to obtain robust control performance for systems \citep{Chen2000ie,Chen2000,Chen2016}.

In DO based control approaches, the model uncertainties and external disturbances are merged into one term and a control law contains the estimated disturbance value by a DO. The aims are to achieve performance specifications while stabilizing the system considering the nominal model of the system and remove the disturbance effect on the system \citep{Chen2004,Yang2013,Ginoya2014}. A nonlinear dynamics inversion control method was designed for the longitudinal autopilot of a missile in \citep{chen2003}. It was reported that the control method exhibit poor performance in case of unknown uncertainties while a basic nonlinear DO (BNDO) based nonlinear dynamics inversion control approach ensured robust performance against uncertainty. The same BNDO has been used to design a robust SMC controller for systems with mismatched uncertainties \citep{Yang2013}. The main drawback in these studies, the BNDO is only be able to estimate time-invariant disturbances. If disturbances are time-varying, then the BNDO gives bias estimates. Furthermore, there are well known nonlinear observers, such as extended Kalman filter (EKF), particle filtering and nonlinear moving horizon estimation methods. EKF works well if linear approximation is valid and noise on measurements is small \citep{Haseltine2005}. Besides, particle filter and nonlinear moving horizon estimation methods require very large computation time \citep{Daum2005}. For this reason, an observer, computationally cheap, is required to be able to estimate time-varying disturbances.

Type-2 fuzzy logic systems (T2FLSs) are proposed as the extended versions of type-1 fuzzy logic systems (T1FLSs) in literature \citep{5584629,Castillo201219,Shing2015}. It allows us to have more degrees of freedom for design than their type-1 counterparts so that it results in better capability of handling uncertainty \citep{Castro2011, 6706738, 6502676,Solis2015,7083756}. Type-2 fuzzy sets are especially preferred as the decision of the position of the membership functions (MFs) precisely is a very troublesome task \citep{885114, castillo2013universal}. However, the computational complexity of generalized type-2 fuzzy sets is very high. For this reason, the interval type-2 fuzzy sets were proposed to decrease computation time and made it feasible in real-time applications \citep{873577, Maldonado2013496, 6636071,6506097,Castillo2014, 7083756, Wagner2015}. 

Neuro-fuzzy structures as a model-free method have been widely used for control and identification of systems in literature. It is well know that the stability of systems controlled by model-free controllers cannot be proven. Therefore, the feedback-error learning scheme has been proposed for neuro-fuzzy structures to guarantee the global asymptotic stability of the system in a compact space \citep{Efe2000}. SMC achieves robustness to parametric uncertainty and high-frequency unmodeled dynamics; therefore, the SMC theory-based learning algorithms for neuro-fuzzy structures have been proposed to ensure the robustness of the overall system \citep{Kaynak2001, Kaynak2009}. Moreover, they ensure faster convergence rate than the traditional learning methods, such as gradient descent, Levenberg-Marquardt and particle swarm optimization, because they are computationally simple. There are numerous examples for SMC theory-based learning algorithms for artificial neural networks, type-1 and type-2 neuro-fuzzy structures \citep{Topalov2009,Kayacan2012}. 

The main contribution of this paper is to develop a novel online SLDO, which can learn the disturbance behavior of systems in time-varying case as distinct from basic nonlinear DOs (BNDOs), be solved in the range of millisecond, and robust against uncertainties. For this purpose, the T2NFS in feedback-error learning scheme is proposed due to fact that they are very suitable techniques for adaptive learning. Additionally, computationally efficient sliding mode learning algorithm is used as the training algorithm of the T2NFS because it is a powerful approach for the stability issue. Consequently, the use of the combination of T2NFS, feedback-error learning scheme and sliding mode control theory harmoniously allow to better handle uncertainties.

The major contributions of this paper are as follows: 
\begin{enumerate}
\item The first major contribution is that a novel estimation approach in the feedback-error learning scheme is developed for disturbance observer design for the first-time. 
\item The second major contribution is that the stability of the training algorithm has been always proven for feedback-error learning methods in literature. In this paper, in addition to the proof of the training algorithm, the overall system stability is proven considering the dynamics of the proposed SLDO by adding a robust term. To the best knowledge of the author, this is also the first-time such a stability analysis is ever proven. 
\end{enumerate}

The minor contributions of this paper are as follows: 
\begin{enumerate}
\item The first minor contribution is that the developed SLDO is solved within milliseconds; therefore, the required computation time for the SLDO is significantly less than other methods, such as particle filter and nonlinear moving horizon estimation methods. 
\item The second minor contribution is that the learning rate of SMC theory-based learning algorithm for the T2NFS is adaptive so that it is possible to estimate the disturbance without the knowledge about the upper bound of the disturbance and its derivatives.
\end{enumerate}

The paper consists of six sections: The formulation of the BNDO is given in Section \ref{sec_prob_form}. The SLDO benefiting from the T2NFS is developed in Section \ref{sec_SLDO}. The T2NFS and online learning algorithm established on SMC theory are respectively represented in Sections \ref{sec_neuro-fuzzy} and \ref{sec_SMClearning}. The stability of the SLDO is proven in Section \ref{sec_stability_SLDO}. The controller design and the stability of the system are given in Section \ref{sec_controller}. The simulation results are represented in Sections \ref{sec_simulation}. Finally, the paper is summarized  in Section \ref{sec_conc}.

\section{Basic Nonlinear Disturbance Observer}\label{sec_prob_form}

A number of physical systems, such as robots, spacecrafts and mechanical systems, are generally described by second-order differential equations. A second-order nonlinear system is written in the following form:
\begin{equation}\label{eq_nonlinearsystem}
\dot{x } =  g_{1}(x) + g_{2}(x) u + z d(t)
\end{equation}
where $x=[x_{1},x_{2}]^{T}$ is the state vector, $u$ is the control input, $d(t)$ is the disturbance, $z=[z_{1},z_{2}]^{T}$ is the disturbance coefficients vector, $g_{1}(x)=[x_{2}, a(x)]^{T}$ and $g_{2}(x)=[0, b(x)]^{T}$ are the nonlinear system dynamics.

The disturbance $d(t)$ in \eqref{eq_nonlinearsystem} is not measurable in practice. Therefore, it is required to be estimated in practice to obtain robust control performance of systems. The following basic nonlinear disturbance observer (BNDO) dynamics have been proposed in  \citep{chen2003,Yang2013} as follows:
 
\begin{eqnarray}\label{eq_DO_dynamics}
\dot{p} & = & - l_{p} z p - l_{p} \Big(  z l_{p} x + g_{1}(x) + g_{2}(x) u \Big) \nonumber \\
\hat{d}_{BN}& = & p + l_{p} x
\end{eqnarray}
where $p$, $l_{p}$ and $\hat{d}_{BN}$ denote respectively the internal state, proportional observer gain and estimated disturbance. By taking time derivative of the estimated disturbance considering  \eqref{eq_DO_dynamics}, the time derivative of the estimated disturbance $\dot{\hat{d}}_{BN}$ is obtained as:
\begin{equation}\label{eq_obs_error_3}
\dot{\hat{d}}_{BN} =  l_{p} z  e_{d} 
\end{equation}

If the time derivative of the actual disturbance $\dot{d}(t)$ is added into \eqref{eq_obs_error_3}, the error dynamics of the BNDO are obtained as follows:
\begin{eqnarray}\label{eq_obs_error_4}
\dot{d}(t) - \dot{\hat{d}}_{BN} & =  & -l_{p} z   e_{d}  +\dot{d}(t) \nonumber \\
\dot{e}_{d} & = &  - l_{p} z   e_{d}  +\dot{d}(t)
\end{eqnarray}
where $e_{d}= d(t) - \hat{d}_{BN} $ is the disturbance error.

\begin{assumption}\label{assumption_BNDO}
The time derivative of the actual disturbance is bounded and $\displaystyle \lim_{ t \to \infty} \dot{d} (t)= 0$.
\end{assumption}

If Assumption 1 is satisfied, then \eqref{eq_obs_error_4} is obtained as follows:
\begin{equation}\label{eq_obs_error_5}
\dot{e}_{d} + l_{p} z  e_{d}  = 0
\end{equation}

\begin{lemma}\label{lemma_BNDO}
\citep{chen2003}: If $l_{p}  z$ is positive, i.e. $l_{p} z>0$, the disturbance error dynamics in \eqref{eq_obs_error_5} converge to zero asymptotically. 
\end{lemma}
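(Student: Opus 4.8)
The plan is to establish asymptotic convergence of the scalar error dynamics \eqref{eq_obs_error_5} via Lyapunov's direct method, which is the natural tool here and keeps the argument consistent with the stability analyses used later in the paper. First I would introduce the quadratic candidate $V = \tfrac{1}{2} e_{d}^{2}$, which is positive definite in $e_{d}$ and vanishes only at $e_{d} = 0$, so that convergence of $V$ to zero is equivalent to convergence of $e_{d}$ to zero.

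Next I would differentiate $V$ along the trajectories of the error system. Using $\dot{V} = e_{d} \dot{e}_{d}$ and substituting $\dot{e}_{d} = -l_{p} z\, e_{d}$ from \eqref{eq_obs_error_5}, I obtain $\dot{V} = -l_{p} z\, e_{d}^{2} = -2 l_{p} z\, V$. Invoking the hypothesis $l_{p} z > 0$, this shows $\dot{V} < 0$ for every $e_{d} \neq 0$, so $V$ is strictly decreasing and bounded below by zero; hence $e_{d} \to 0$ asymptotically. The identity $\dot{V} = -2 l_{p} z\, V$ additionally reveals that the decay is in fact exponential with rate $l_{p} z$.

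As an independent cross-check, I would note that \eqref{eq_obs_error_5} is a first-order linear homogeneous ordinary differential equation whose closed-form solution is $e_{d}(t) = e_{d}(0)\, e^{-l_{p} z t}$; positivity of $l_{p} z$ then forces the exponential to vanish as $t \to \infty$, recovering the same conclusion and confirming the rate obtained from the Lyapunov computation.

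The argument is elementary, so I do not anticipate a genuine obstacle. The only point requiring care is the reading of the coefficient $l_{p} z$: since $z$ is introduced as a vector while $l_{p}$ is scalar, I would make explicit that the sign condition $l_{p} z > 0$ is to be interpreted as positivity of the effective scalar gain multiplying $e_{d}$ in \eqref{eq_obs_error_5}, so that $\dot{V}$ is well defined and sign definite. With that convention fixed, the negative definiteness of $\dot{V}$, and therefore the asymptotic convergence claimed in the statement, follows at once.
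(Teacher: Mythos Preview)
Your proof is correct and entirely standard: the quadratic Lyapunov candidate $V=\tfrac12 e_d^2$ yields $\dot V=-l_p z\,e_d^2<0$ under the hypothesis, and the explicit solution $e_d(t)=e_d(0)e^{-l_p z t}$ confirms exponential decay. Your remark about interpreting $l_p z$ as a scalar is also apt, since in the paper both $l_p$ and $z$ are introduced as vectors and the product is implicitly an inner product.

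The paper, however, does not supply its own proof of this lemma: it is simply stated and attributed to \citep{chen2003}, with no accompanying argument beyond the sentence that follows. So there is nothing to compare against at the level of technique; your Lyapunov argument is in the same spirit as the Lyapunov analyses the paper carries out for Theorems~\ref{theorem1} and~\ref{theorem2}, and would be a natural way to fill the gap if a self-contained proof were desired.
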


Lemma 1 implies that the estimated disturbance by the BNDO is able to track the actual disturbance of the system in \eqref{eq_nonlinearsystem} asymptotically in case Assumption 1 is satisfied.

\begin{remark}\label{remark_BNDO}
If the time derivative of the actual disturbance $\dot{d}(t)$ is not equal to zero, the error dynamics of th BNDO cannot converge to zero so that BNDO gives bias. Therefore, there exists always difference between the estimated and true values of the disturbance. Similar observers have been designed in literature and the same drawback has been reported in \citep{chen2003,Yang2013}.
\end{remark} 

\section{Self-learning Disturbance Observer}\label{sec_SLDO}

\begin{figure}[b!]
  \centering
  \includegraphics[width=3.6in]{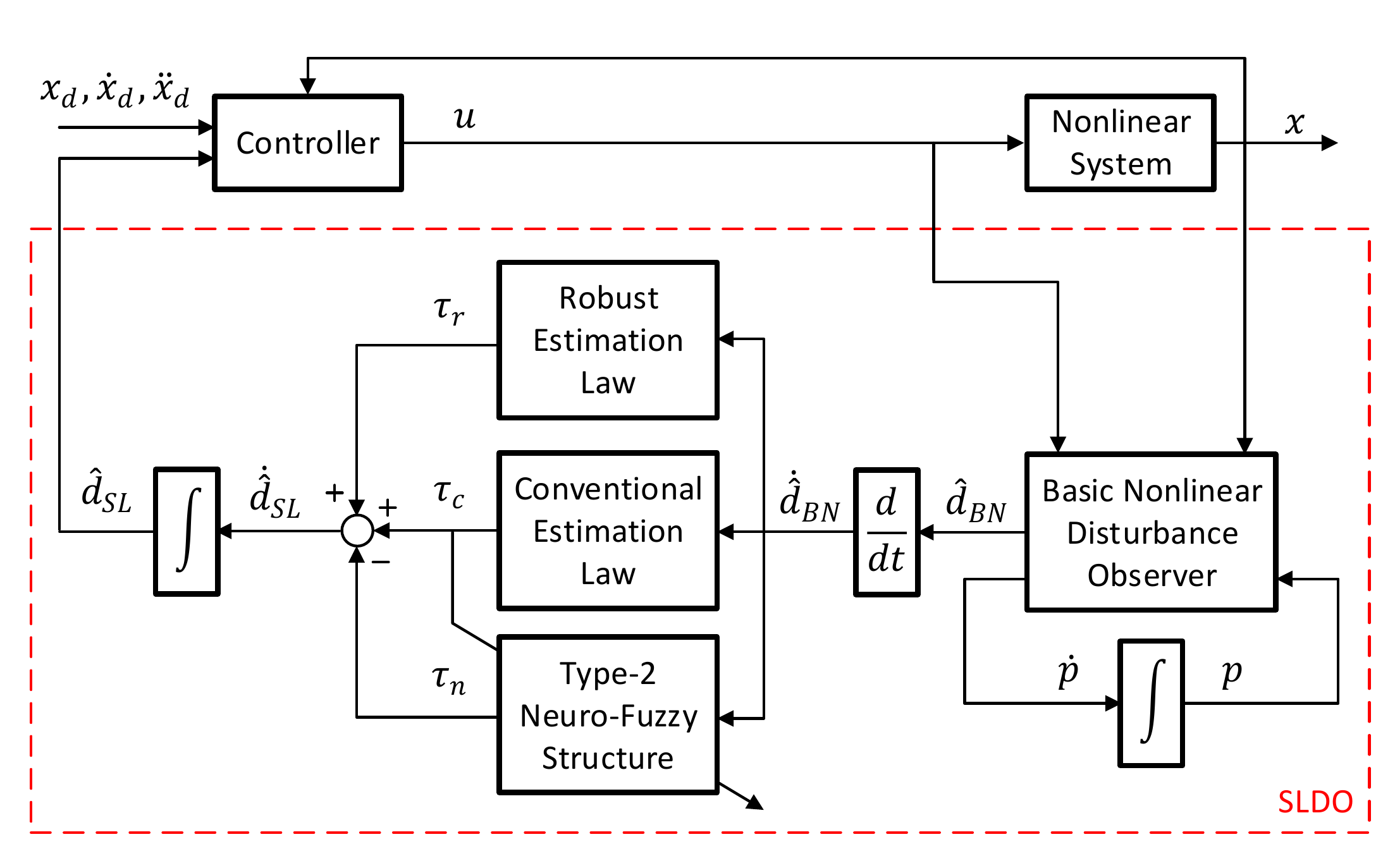}\\
  \caption{Schematic diagram of the self-learning disturbance observer (SLDO) }\label{fig_sldo_diagram}
\end{figure}

BNDOs cannot give unbiased estimation results in case of time-varying disturbances; therefore, a fast, computationally efficient, adaptive and robust disturbance observer is required. A novel estimation law is proposed as follows: 
\begin{equation}\label{eq_obs_estimation_law_self}
\dot{\hat{d}}_{SL} = \tau_{c} + \tau_{r} - \tau_{n}
\end{equation}
where $\tau_{c}$, $\tau_{r}$ and $\tau_{n}$ denote respectively the outputs of the conventional estimation law, robust term and T2NFS. The schematic diagram of the SLDO is illustrated in Fig \ref{fig_sldo_diagram}. As seen, the BNDO is working in series with the feedback-error learning algorithm in which the conventional and robust estimation laws work in parallel with T2NFS. 

The conventional  estimation law used in this paper is defined as follows:
\begin{equation}\label{eq_obs_estimation_law_conven}
\tau_{c} =  \dot{\hat{d}}_{BN} + \frac{l_{d} z}{l_{p} z}\ddot{\hat{d}}_{BN}
\end{equation}
where $\dot{\hat{d}}_{BN}$ denotes the time derivative of the estimated value of the disturbance by the BNDO, $l_{p}$ and $l_{d}$ denote the proportional and derivative gain vectors, and $l_{p} z$ and $l_{d} z$ are positive, i.e. $l_{p} z, l_{d} z > 0$.

The robust estimation law is written as follows:
\begin{equation}\label{eq_obs_estimation_law_self_robust}
\tau_{r} =  \frac{l_{r} z}{l_{p} z} \dot{\hat{d}}_{BN}
\end{equation}
$l_{r} z$ is positive, i.e. $l_{r} z >0$, and $l_{r}$ denotes the robust gain vector.

\subsection{Type-2 Neuro-Fuzzy Structure}\label{sec_neuro-fuzzy}

An interval type-2 Takagi-Sugeno-Kang (TSK) fuzzy \emph{if-then} rule $R_{ij}$ is written as:
\begin{equation}\label{eq_Rlinearfunction}
R_{ij}: \;\; \textrm{If} \; \xi_1 \; \textrm{is} \;\; \widetilde{1}_i \;\; \textrm{and} \; \xi_2 \; \textrm{is} \;\; \widetilde{2}_j, \;\; \textrm{then} \; f_{ij}=\Upsilon_{ij}
\end{equation}
where $\xi_1=\dot{\hat{d}}_{BN}$ and $\xi_2= \ddot{\hat{d}}_{BN}$ denote the inputs while $\widetilde{1}_i$ and $\widetilde{2}_j$  denote type-2 fuzzy sets for inputs. The function $f_{ij}$ is the output of the rules and the total number of the rules are equal to $K=I \times J$ in which $I$ and $J$ are the total number of the  inputs.

The upper and lower Gaussian membership functions for type-2 fuzzy logic systems are written as follows:
\begin{equation}\label{eq_mu1_lower}
\underline{\mu}_{1i}(\xi_1) = \exp\Bigg(-\bigg(\frac{\xi_{1}-\underline{c}_{1i}}{\underline{\sigma}_{1i}}\bigg)^2\Bigg)
\end{equation}
\begin{equation}\label{eq_mu1_upper}
\overline{\mu}_{1i}(\xi_1) = \exp\Bigg(-\bigg(\frac{\xi_{1}-\overline{c}_{1i}}{\overline{\sigma}_{1i}}\bigg)^2\Bigg)
\end{equation}
\begin{equation}\label{eq_mu2_lower}
\underline{\mu}_{2j}(\xi_2) = \exp\Bigg(-\bigg(\frac{\xi_{2}-\underline{c}_{2j}}{\underline{\sigma}_{2j}}\bigg)^2\Bigg)
\end{equation}
\begin{equation}\label{eq_mu2_upper}
\overline{\mu}_{2j}(\xi_2) = \exp\Bigg(-\bigg(\frac{\xi_{2}-\overline{c}_{2j}}{\overline{\sigma}_{2j}}\bigg)^2\Bigg)
\end{equation}
where $\underline{c}, \overline{c}, \underline{\sigma}, \overline{\sigma}$ denote respectively the lower and upper mean, and the lower and upper standard deviation of the membership functions. These parameters are adjustable for the T2NFS.

The lower and upper membership functions  $\underline{\mu }$ and $\overline{\mu }$ of A2-C0 fuzzy system employed in this paper are determined for every signal. Then, the firing strength of rules are calculated as follows:
\begin{equation}\label{eq_wij_lower_upper}
\underline{w}_{ij} = \underline{\mu}_{1i}(\xi_1)  \underline{\mu}_{2j}(\xi_2) \quad \textrm{and} \quad
\overline{w}_{ij} = \overline{\mu}_{1i}(\xi_1) \overline{\mu}_{2j}(\xi_2)
\end{equation}

The output of the every fuzzy rule is a linear function $f_{ij}$ formulated in \eqref{eq_Rlinearfunction}.  The output of the network is formulated below:
\begin{equation}\label{eq_taun}
\tau_n=q \sum_{i=1}^{I}\sum_{j=1}^{J}f_{ij}\widetilde{\underline{w}}_{ij}+(1-q)\sum_{i=1}^{I}\sum_{j=1}^{J}f_{ij}\widetilde{\overline{w}}_{ij}
\end{equation}
where $\widetilde{\underline{{w}}}_{ij}$ and $\widetilde{\overline{{w}}}_{ij}$ are the normalized firing strengths of the lower and upper output signals of the neuron $ij$ are written as follows:
\begin{equation}\label{eq_wij_lower_upper_normalized}
\widetilde{\underline{w}}_{ij} = \frac{\underline{w}_{ij}}{\sum_{i=1}^{I}\sum_{j=1}^{J}\underline{w}_{ij}} \;\;\; \textrm{and} \;\;\;
\widetilde{\overline{w}}_{ij} = \frac{\overline{w}_{ij}}{\sum_{i=1}^{I}\sum_{j=1}^{J}\overline{w}_{ij}}
\end{equation}
The design parameter $q$ weights the participation of the lower and upper firing levels and is generally set to $0.5$. In this paper, it is formulated as a time-varying parameter in the next subsection. 

The vectors are defined as::
\begin{eqnarray}
\widetilde{\underline{W}}(t) &=& [\widetilde{\underline{w}}_{11}(t) \; \widetilde{\underline{w}}_{12}(t) \dots \widetilde{\underline{w}}_{21}(t) \dots \widetilde{\underline{w}}_{ij}(t) \dots \widetilde{\underline{w}}_{IJ}(t)]^{T} \nonumber \\
\widetilde{\overline{W}}(t) &=& [\widetilde{\overline{w}}_{11}(t) \; \widetilde{\overline{w}}_{12}(t) \dots \widetilde{\overline{w}}_{21}(t) \dots \widetilde{\overline{w}}_{ij}(t) \dots \widetilde{\overline{w}}_{IJ}(t)]^{T} \nonumber \\
F &=& [f_{11} \; f_{12} \dots f_{21} \dots f_{ij} \dots f_{IJ}] \nonumber
\end{eqnarray}
where these normalized firing strengths are between $0$ and $1$, i.e. $0<\widetilde{\underline{w}}_{ij} \leq 1$ and $0<\widetilde{\overline{w}}_{ij} \leq 1$. In addition, $\sum_{i=1}^{I}\sum_{j=1}^{J}\widetilde{\underline{w}}_{ij} = 1$ and $\sum _{i=1}^{I}\sum_{j=1}^{J}\widetilde{\overline{w}}_{ij} = 1$.

\subsection{SMC Theory-based Learning Algorithm}\label{sec_SMClearning}

The sliding surface $s$ is formulated as follows:
\begin{equation}\label{eq_slidingfsurface_general}
s \left(\tau_{c}\right)= \tau_{c}
\end{equation}
where $\tau_{c}$ is the output of the conventional estimation law and used as a sliding surface. It is to be noted that the sliding surface is used as learning error to train the SMC theory-based learning algorithm.

The adaptation rules of the T2NFS parameters are given by the following equations:
\begin{equation} \label{eq_c_1i_lower}
\dot{\underline{c}}_{1i} = \dot{\xi}_{1} + (\xi_{1} - \underline{c}_{1i}) \alpha \textrm{sgn}\left(s  \right)
\end{equation}
\begin{equation} \label{eq_c_1i_upper}
\dot{\overline{c}}_{1i} = \dot{\xi}_{1} + (\xi_{1} - \overline{c}_{1i}) \alpha \textrm{sgn}\left( s \right)
\end{equation}
\begin{equation} \label{eq_c_2j_lower}
\dot{\underline{c}}_{2j} = \dot{\xi}_{2} + (\xi_{2} - \underline{c}_{2j}) \alpha \textrm{sgn}\left( s \right)
\end{equation}
\begin{equation} \label{eq_c_2j_upper}
\dot{\overline{c}}_{2j} = \dot{\xi}_{2} + (\xi_{2} - \overline{c}_{2j}) \alpha \textrm{sgn}\left( s \right)
\end{equation}
\begin{equation}\label{eq_sigma_1i_lower}
\dot{\underline{\sigma}}_{1i} = - \bigg( \underline{\sigma}_{1i} + \frac{ (\underline{\sigma}_{1i} )^3}{(\xi_{1} - \underline{c}_{1i})^2} \bigg) \alpha  \textrm{sgn}\left( s \right)
\end{equation}
\begin{equation}\label{eq_sigma_1i_upper}
\dot{\overline{\sigma}}_{1i} = - \bigg( \overline{\sigma}_{1i} + \frac{ (\overline{\sigma}_{1i} )^3}{(\xi_{1} - \overline{c}_{1i})^2} \bigg) \alpha  \textrm{sgn}\left( s \right)
\end{equation}
\begin{equation}\label{eq_sigma_2j_lower}
\dot{\underline{\sigma}}_{2j} = - \bigg( \underline{\sigma}_{2j} + \frac{ (\underline{\sigma}_{2j} )^3}{(\xi_{2} - \underline{c}_{2j})^2} \bigg) \alpha  \textrm{sgn}\left( s  \right)
\end{equation}
\begin{equation}\label{eq_sigma_2j_upper}
\dot{\overline{\sigma}}_{2j} = - \bigg( \overline{\sigma}_{2j} + \frac{ (\overline{\sigma}_{2j} )^3}{(\xi_{2} - \overline{c}_{2j})^2} \bigg) \alpha  \textrm{sgn}\left( s  \right)
\end{equation}
\begin{equation}\label{eq_f_ij}
\dot{f}_{ij} =-\frac{q \widetilde{\underline{w}}_{ij}+ (1-q)\widetilde{\overline{w}}_{ij}}{(q\widetilde{\underline{W}}+(1-q) \widetilde{\overline{W}})^T(q\widetilde{\underline{W}}+ (1-q)\widetilde{\overline{W}})}\alpha sgn( s)
\end{equation}
\begin{equation}\label{eq_q}
\dot{q} =-\frac{1}{F(\widetilde{\underline{W}}-\widetilde{\overline{W}})^{T}}\alpha sgn( s  )
\end{equation}
\begin{equation}\label{eq_alpha}
\dot{\alpha} = \gamma_{\alpha} \mid s \mid 
\end{equation}
where $\alpha$ and $\gamma_{\alpha}$ denote respectively the learning rate and the coefficient of the adaptation for the learning rate, and they must be positive, i.e. $\alpha, \gamma_{\alpha} > 0$.

\begin{theorem}[Stability of the learning algorithm]\label{theorem1}
If adaptations rules are proposed as in \eqref{eq_c_1i_lower}-\eqref{eq_alpha} and the final value of the learning rate $\alpha^{*}$ is large enough, i.e. $\alpha^{*} > \dot{\tau}_{r}^{*} + \ddot{\hat{d}}^{*}_{SL} $ where $\dot{\tau}_{r}^{*} $ and $\ddot{\hat{d}}^{*}_{SL}$ are respectively the upper bounds of $\dot{\tau}_{r} $ and $\ddot{\hat{d}}_{SL}$, this ensures that $\tau_{c}$ will converge to zero in finite time for a given arbitrary initial condition $\tau_{c}(0)$.  
\end{theorem}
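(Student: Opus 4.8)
The plan is to prove the claim as a sliding-mode reaching result for the surface $s=\tau_{c}$ by means of the Lyapunov candidate $V=\tfrac{1}{2}s^{2}=\tfrac{1}{2}\tau_{c}^{2}$, whose derivative along the adapted trajectories is $\dot{V}=s\dot{s}=\tau_{c}\dot{\tau}_{c}$. The goal is to show that, once the learning rate is large enough, $\dot{V}\le-\eta|s|$ holds for some constant $\eta>0$; this $\eta$-reachability condition forces $|s|$, and hence $\tau_{c}$, to reach zero in a finite time bounded by $|\tau_{c}(0)|/\eta$.

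First I would differentiate the estimation law \eqref{eq_obs_estimation_law_self}, written as $\tau_{c}=\dot{\hat{d}}_{SL}-\tau_{r}+\tau_{n}$, to get $\dot{\tau}_{c}=\ddot{\hat{d}}_{SL}-\dot{\tau}_{r}+\dot{\tau}_{n}$. Here $\ddot{\hat{d}}_{SL}$ and $\dot{\tau}_{r}$ are treated as bounded exogenous signals with bounds $\ddot{\hat{d}}^{*}_{SL}$ and $\dot{\tau}^{*}_{r}$, so the entire argument reduces to evaluating $\dot{\tau}_{n}$ under the adaptation rules \eqref{eq_c_1i_lower}--\eqref{eq_q} and showing that it delivers a dominating discontinuous term $-\alpha\,\textrm{sgn}(s)$.

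The computation of $\dot{\tau}_{n}$ is the heart of the argument and the step I expect to be the main obstacle, since the whole design rests on a sequence of exact algebraic cancellations. I would carry it out in three stages. First, for each Gaussian membership function I substitute the corresponding mean law \eqref{eq_c_1i_lower} and standard-deviation law \eqref{eq_sigma_1i_lower} into its time derivative: the feedforward term $\dot{\xi}_{1}$ in the mean update is engineered precisely to annihilate the input-variation contribution, while the cubic correction in the $\dot{\underline{\sigma}}_{1i}$ update cancels the surviving quadratic factor, collapsing the derivative to $\dot{\underline{\mu}}_{1i}=-2\underline{\mu}_{1i}\,\alpha\,\textrm{sgn}(s)$, and symmetrically for the remaining membership functions. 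Second, since each firing strength $\underline{w}_{ij}=\underline{\mu}_{1i}\underline{\mu}_{2j}$ then scales at the common rate $-4\underline{w}_{ij}\,\alpha\,\textrm{sgn}(s)$, that common factor cancels in the normalization \eqref{eq_wij_lower_upper_normalized}, so that $\dot{\widetilde{\underline{w}}}_{ij}=\dot{\widetilde{\overline{w}}}_{ij}=0$ and the antecedent updates leave the normalized firing strengths frozen. Third, with the normalized weights constant, $\dot{\tau}_{n}$ reduces to the consequent and weighting contributions alone, and substituting \eqref{eq_f_ij} and \eqref{eq_q} the quadratic forms in their numerators cancel against the matching denominators, leaving the clean net contribution $\dot{\tau}_{n}=-\alpha\,\textrm{sgn}(s)$; the threshold imposed on $\alpha^{*}$ is exactly what is needed for this discontinuous term to dominate the bounded remainder.

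Finally I would assemble the estimate $\dot{V}=s\big(\ddot{\hat{d}}_{SL}-\dot{\tau}_{r}\big)-\alpha\,s\,\textrm{sgn}(s)\le|s|\big(\ddot{\hat{d}}^{*}_{SL}+\dot{\tau}^{*}_{r}-\alpha\big)$. Because $\dot{\alpha}=\gamma_{\alpha}|s|\ge0$ by \eqref{eq_alpha}, the learning rate is nondecreasing while $s\neq0$, so it climbs toward its terminal value $\alpha^{*}$; invoking the hypothesis $\alpha^{*}>\dot{\tau}^{*}_{r}+\ddot{\hat{d}}^{*}_{SL}$ and setting $\eta:=\alpha^{*}-\dot{\tau}^{*}_{r}-\ddot{\hat{d}}^{*}_{SL}>0$ yields $\dot{V}\le-\eta|s|=-\eta\sqrt{2V}$ once the threshold is crossed. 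Integrating this differential inequality gives a finite reaching time $t_{r}\le|\tau_{c}(0)|/\eta$, establishing that $\tau_{c}$ converges to zero in finite time from any initial condition $\tau_{c}(0)$. The one point demanding care is the adaptive nature of $\alpha$: I would argue that its monotone growth guarantees the reaching inequality eventually holds, which is precisely what the hypothesis on the terminal value $\alpha^{*}$ encodes.
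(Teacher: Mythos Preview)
Your overall strategy is sound and tracks the paper's, but two departures are worth flagging.

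First, the paper does \emph{not} use $V=\tfrac12\tau_c^2$; it uses the augmented candidate $V=\tfrac12\tau_c^2+\tfrac{1}{\gamma_\alpha}(\alpha-\alpha^*)^2$. When $\dot\alpha=\gamma_\alpha|s|$ from \eqref{eq_alpha} is substituted, the adaptation term contributes $2|s|(\alpha-\alpha^*)$, and the $+2\alpha|s|$ piece exactly cancels the time-varying $-2\alpha|s|$ coming from $\dot\tau_n$, leaving the \emph{constant} $-2\alpha^*$ in the final bound. This is the standard adaptive-control device, and it is precisely what your plan is missing: your inequality $\dot V\le|s|\big(\ddot{\hat d}^*_{SL}+\dot\tau_r^*-\alpha\big)$ only becomes a genuine reaching condition with margin $\eta=\alpha^*-(\dot\tau_r^*+\ddot{\hat d}^*_{SL})$ in the limit, since $\alpha$ is nondecreasing and by definition approaches $\alpha^*$ from below without ever exceeding it. Your closing remark acknowledges this is the delicate point; the paper's augmented Lyapunov function resolves it cleanly rather than by a separate monotone-growth argument.

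Second, a computational slip: the consequent update \eqref{eq_f_ij} and the weighting update \eqref{eq_q} \emph{each} contribute $-\alpha\,\textrm{sgn}(s)$ to $\dot\tau_n$, so the net result (see the paper's appendix, Eq.~\eqref{dotVc4}) is $\dot\tau_n=-2\alpha\,\textrm{sgn}(s)$, not $-\alpha\,\textrm{sgn}(s)$. Your antecedent-side analysis---the cancellation yielding $\dot{\widetilde{\underline w}}_{ij}=\dot{\widetilde{\overline w}}_{ij}=0$---is correct and matches the paper; only the final constant is off. The argument survives (the extra factor of $2$ only helps), but the mismatch matters if you want to reproduce the paper's bound $\dot V\le|\tau_c|(-2\alpha^*+\dot\tau_r^*+\ddot{\hat d}^*_{SL})$.
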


\begin{proof}
The Lyapunov function is written as follows:
\begin{eqnarray}\label{eq_smc_Lyapunov}
V = \frac{1}{2} \tau_{c}^{2} +  \frac{1}{\gamma_{\alpha} } (\alpha - \alpha^{*})^{2} 
\end{eqnarray}
By taking the time derivative of the Lyapunov function in \eqref{eq_smc_Lyapunov}, it is obtained as follows:
\begin{eqnarray}\label{eq_smc_Lyapunov_d} 
\dot{V} =   \tau_{c} \dot{\tau}_{c} +  \frac{2 \dot{\alpha} }{\gamma_{\alpha}} (\alpha - \alpha^{*})
\end{eqnarray}
If the \eqref{eq_alpha} is inserted into the equation above:
\begin{eqnarray}\label{eq_smc_Lyapunov_dd} 
\dot{V} =   \tau_{c} (\dot{\tau}_{n} - \dot{\tau}_{r}+ \ddot{\hat{d}}_{SL} ) +2 \mid s\mid (\alpha - \alpha^{*}) 
\end{eqnarray}
The calculation of $\dot{\tau}_{n}$ in \eqref{dotVc4} is inserted into \eqref{eq_smc_Lyapunov_dd}, it is obtained as follows:
\begin{eqnarray}\label{eq_smc_Lyapunov_d2}
\dot{V} =   \tau_{c} \Big( -2\alpha  \textrm{sgn}\left( s \right) - \dot{\tau}_{r} + \ddot{\hat{d}}_{SL} \Big) +2 \mid s \mid  (\alpha - \alpha^{*}) 
\end{eqnarray}
If \eqref{eq_slidingfsurface_general} is inserted into the equation above:
\begin{eqnarray}\label{eq_smc_Lyapunov_d2}
\dot{V} =   \tau_{c} \Big( -2\alpha  \textrm{sgn}\left( \tau_{c}\right) - \dot{\tau}_{r} + \ddot{\hat{d}}_{SL} \Big) +2 \mid \tau_{c} \mid (\alpha - \alpha^{*}) 
\end{eqnarray}
If it is assumed that $\ddot{\hat{d}}_{SL}$ and $\dot{\tau}_{r}$ are upper bounded by $\ddot{\hat{d}}^{*}_{SL}$ and $\dot{\tau}_{r}^{*}$, \eqref{eq_smc_Lyapunov_d2} is obtained as follows:
\begin{eqnarray}\label{eq_smc_Lyapunov_d3}
\dot{V} &= &  \mid \tau_{c} \mid ( -2 \alpha + \dot{\tau}_{r}^{*} + \ddot{\hat{d}}^{*}_{SL} ) +2 \mid \tau_{c}\mid (\alpha - \alpha^{*})  \nonumber \\
&= &  \mid \tau_{c} \mid ( -2 \alpha^{*} + \dot{\tau}_{r}^{*} + \ddot{\hat{d}}^{*}_{SL} ) 
\end{eqnarray}
As stated in Theorem \ref{theorem1}, if the final value of the learning rate $\alpha^{*}$ is large enough, i.e. $\alpha^{*} >  \dot{\tau}_{r}^{*} + \ddot{\hat{d}}^{*} _{SL}$, then the time derivative of the Lyapunov function is negative, i.e. $\dot{V}<0$ so that the SMC theory-based learning algorithm is stable and $\tau_{c}$ will converge to zero in finite time.
\end{proof}

\begin{remark}\label{remark_smctbla}
Since the adaptation rules in \eqref{eq_alpha} are
enforced, the final value of the learning rate of the T2NFS is determined during the adaptation of learning rate, and it is able to reach large values to make learning algorithm stable. This is a superiority of the proposed approach in this study as distinct from previous studies in which the upper bounds are needed to be foreknown.
\end{remark}

SMC theory endures high-frequency oscillations, i.e. chattering.  In this paper, the function in \eqref{eq_chatter} has been proposed to remove the chattering effect as the sign function in \eqref{eq_c_1i_lower}-\eqref{eq_alpha}.
\begin{equation}\label{eq_chatter}
\textrm{sgn}(s):=\frac{s}{ \mid s \mid + \delta}
\end{equation}
where $\delta=0.05$.

\begin{remark}\label{remark_smctbla2}
The usage of  the sliding surface $s$ in \eqref{eq_slidingfsurface_general} as learning error for the T2NFS with the adaptation laws in \eqref{eq_c_1i_lower}-\eqref{eq_alpha} accomplishes the desired sliding regime for the observer. 
\end{remark}

\subsection{Stability Analysis}\label{sec_stability_SLDO}

The proposed SLDO law in \eqref{eq_obs_estimation_law_self} is re-written considering \eqref{eq_obs_estimation_law_conven} and \eqref{eq_obs_estimation_law_self_robust} as:
\begin{eqnarray}\label{eq_obs_estimation_law_self_stability}
\dot{\hat{d}}_{SL} =  (1 + \frac{l_{r} z}{l_{p} z} )  \dot{\hat{d}}_{BN} +  \frac{l_{d} z}{l_{p} z} \ddot{\hat{d}}_{BN}  - \tau_{n}
\end{eqnarray}

The error dynamics for the SLDO are obtained by adding the actual disturbance rate $\dot{d}$ into the estimated disturbance rate in \eqref{eq_obs_estimation_law_self_stability} and considering the calculated time derivative of the estimated disturbance by BNDO in \eqref{eq_obs_error_3}:
\begin{eqnarray}\label{eq_obs_error_self_1}
 \dot{d} - \dot{\hat{d}}_{SL}  &=&  - ( l_{p} + l_{r} ) z e_{d} -  l_{d} z \dot{e}_{d} + \tau_{n} + \dot{d} \nonumber \\
\dot{e}_{d}  &=&  \frac{ - ( l_{p} + l_{r} ) z e_{d}  + \tau_{n} + \dot{d} } { 1+ l_{d} z }
\end{eqnarray}

By taking the time derivative of \eqref{eq_obs_error_self_1}, it is obtained as follows:
\begin{equation}\label{eq_obs_error_self_2}
\ddot{e}_{d}  =  \frac{ - ( l_{p} + l_{r} ) z \dot{e}_{d}  + \dot{\tau}_{n} + \ddot{d} } { 1+ l_{d} z }
\end{equation}
As calculated in  \eqref{dotVc4}, $\dot{\tau}_{n}=- 2 \alpha sgn(s)$ is inserted into \eqref{eq_obs_error_self_2};
\begin{equation}\label{eq_obs_error_self_3}
\ddot{e}_{d}  =  \frac{ -(  l_{p} + l_{r} ) z \dot{e}_{d} - 2 \alpha sgn(s) + \ddot{d}} {1 + l_{d} z}
\end{equation}

If  $\tau_{c}$ in \eqref{eq_obs_estimation_law_conven} is inserted into \eqref{eq_slidingfsurface_general}, the sliding surface is obtained as follows:
\begin{eqnarray}\label{eq_slidingfsurface}
s \left( \dot{\hat{d}}_{BN}, \ddot{\hat{d}}_{BN} \right)  =  \frac{l_d z}{l_p z} \Big(\ddot{\hat{d}}_{BN} + \lambda \dot{\hat{d}}_{BN} \Big ) 
\end{eqnarray}
where $\lambda=\frac{l_p z}{l_d z}$ is the slope of the sliding surface. 
The time derivative of the sliding surface is obtained as
\begin{equation}\label{eq_slidingfsurface_dot}
\dot{s} = \frac{l_{d} z}{l_{p} z}  \Big(\dddot{\hat{d}}_{BN} + \frac{l_{p} z}{l_{d} z}  \ddot{\hat{d}}_{BN} \Big ) 
\end{equation}

\begin{theorem}[Stability of the SLDO]\label{theorem2}
The estimation law in \eqref{eq_obs_estimation_law_self} is employed as a DO, the closed-loop error dynamics for the SLDO are stable if the robust gain $l_{r}$ is equal to $\frac{l_{p}}{l_{d} z}$, i.e. $l_{r} = \frac{l_{p}}{l_{d} z}$, and the  final value of the learning rate of T2NFS $\alpha^{*}$ is large enough, $\alpha^{*} > \ddot{d}^{*} $ where the acceleration of the actual disturbance $\ddot{d}$ is upper bounded by $\ddot{d}^{*}$.
\end{theorem}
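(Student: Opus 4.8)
The plan is to take the sliding-surface-based Lyapunov function $V = \frac{1}{2} s^2$ and to show that the prescribed robust gain $l_r = \frac{l_p}{l_d z}$ is exactly what collapses the reaching dynamics into a form for which $\alpha^* > \ddot{d}^*$ forces $\dot{V} < 0$; convergence of $e_d$ then follows from the structure of the sliding surface itself.

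First I would express $s$ and $\dot{s}$ purely in terms of the disturbance error. From the BNDO identity \eqref{eq_obs_error_3}, $\dot{\hat{d}}_{BN} = l_p z\, e_d$, so that $\ddot{\hat{d}}_{BN} = l_p z\, \dot{e}_d$ and $\dddot{\hat{d}}_{BN} = l_p z\, \ddot{e}_d$. Substituting these into \eqref{eq_slidingfsurface} and \eqref{eq_slidingfsurface_dot} reduces them to
\[
s = l_d z\, \dot{e}_d + l_p z\, e_d, \qquad \dot{s} = l_d z\, \ddot{e}_d + l_p z\, \dot{e}_d .
\]

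Next I would insert the closed-loop error dynamics \eqref{eq_obs_error_self_3} for $\ddot{e}_d$ into $\dot{s}$ and clear the common denominator $1 + l_d z$. Collecting the $\dot{e}_d$ terms yields a coefficient equal to $l_p z - (l_d z)(l_r z)$, and this is the crux of the proof: the choice $l_r = \frac{l_p}{l_d z}$ makes $(l_d z)(l_r z) = l_p z$, so this coefficient vanishes identically and the reaching dynamics reduce to
\[
\dot{s} = \frac{l_d z}{1 + l_d z}\bigl(-2\alpha\,\textrm{sgn}(s) + \ddot{d}\bigr).
\]
Computing $\dot{V} = s\dot{s}$, using $s\,\textrm{sgn}(s) = \mid s \mid$ and $\mid \ddot{d} \mid \le \ddot{d}^*$, and evaluating the learning rate at its converged final value $\alpha^*$ (which exists because Theorem \ref{theorem1} drives $s = \tau_c \to 0$, hence $\dot{\alpha} = \gamma_\alpha \mid s \mid \to 0$) gives
\[
\dot{V} \le \frac{l_d z}{1 + l_d z}\,\mid s \mid\,\bigl(-2\alpha^* + \ddot{d}^*\bigr) .
\]
Since $l_d z > 0$ and $\alpha^* > \ddot{d}^*$, the bracket is negative, so $\dot{V} \le -\eta \mid s \mid$ for some $\eta > 0$, which forces $s$ to zero in finite time. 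On the manifold $s = 0$, the relation $l_d z\, \dot{e}_d + l_p z\, e_d = 0$ reduces the error to $\dot{e}_d = -\lambda e_d$ with $\lambda = \frac{l_p z}{l_d z} > 0$, an asymptotically stable first-order system, so $e_d \to 0$.

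The main obstacle is the cancellation step: one must substitute \eqref{eq_obs_error_self_3} carefully and verify that the coefficient of $\dot{e}_d$ is precisely $l_p z - (l_d z)(l_r z)$, so that the stated robust gain annihilates it and removes the dependence on $\dot{e}_d$ from the reaching law. A secondary subtlety is the bookkeeping of the adaptive learning rate: because the factor $\frac{l_d z}{1 + l_d z}$ multiplies the $-2\alpha\,\textrm{sgn}(s)$ term, the clean $\alpha$-cancellation available in Theorem \ref{theorem1} no longer applies to an augmented Lyapunov function, so I would invoke the converged value $\alpha^*$ for the reaching condition. Finally, I would note that the discontinuous $\textrm{sgn}$ is used for the analysis, while the smoothed version \eqref{eq_chatter} only introduces a thin boundary layer that does not alter the conclusion.
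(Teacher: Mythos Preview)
Your reduction of $s$ and $\dot s$ to the disturbance error via \eqref{eq_obs_error_3}, and your identification of $l_r = l_p/(l_d z)$ as exactly the gain that annihilates the $\dot e_d$ coefficient in $\dot s$, are correct and coincide with the paper's argument; so is the sliding-mode endgame $\dot e_d = -\lambda e_d$ once $s=0$.

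The gap is in your handling of the adaptive rate. You assert that ``the clean $\alpha$-cancellation available in Theorem~\ref{theorem1} no longer applies to an augmented Lyapunov function,'' but this is precisely what the paper does: it takes
\[
V \;=\; \tfrac{1}{2}s^{2} \;+\; \frac{l_d z}{\gamma_\alpha (1+l_d z)}\,(\alpha-\alpha^{*})^{2},
\]
with the coefficient on the adaptive term chosen to match the factor $\frac{l_d z}{1+l_d z}$ that already multiplies $-2\alpha\,\textrm{sgn}(s)$ in $\dot s$. Differentiating and inserting $\dot\alpha=\gamma_\alpha|s|$ then cancels the time-varying $\alpha$ exactly and leaves $\dot V \le \frac{l_d z}{1+l_d z}\,|s|\,(-2\alpha^{*}+\ddot d^{*})$ for all $t$, not just asymptotically. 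Your fallback of evaluating the unaugmented $\dot V = s\dot s$ at the converged value $\alpha^{*}$ does not close the argument: since $\dot\alpha\ge 0$, one has $\alpha(t)\le\alpha^{*}$ during the transient, hence $-2\alpha(t)+\ddot d^{*}\ge -2\alpha^{*}+\ddot d^{*}$, and the replacement goes the \emph{wrong} way in an upper bound. Invoking Theorem~\ref{theorem1} to first force $s\to 0$ also imports a different hypothesis ($\alpha^{*}>\dot\tau_r^{*}+\ddot{\hat d}_{SL}^{*}$ rather than $\alpha^{*}>\ddot d^{*}$) and makes your own reaching computation redundant. The fix is simply to rescale the $(\alpha-\alpha^{*})^{2}$ term, not to discard it.
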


\begin{proof}
The Lyapunov function is written as follows:
\begin{equation}\label{eq_observer_Lyapunov}
V= \frac{1}{2}  s^{2} + \frac{ l_{d} z }{\gamma_{\alpha} (1 + l_{d} z)} (\alpha - \alpha^{*})^{2}
\end{equation}
By taking the time derivative of the Lyapunov function above considering \eqref{eq_alpha}, it is obtained as
\begin{eqnarray}\label{eq_observer_Lyapunov_d} 
\dot{V} &=&  s \dot{s} +  l_{d} z \frac{2\mid s \mid }{1 + l_{d} z} (\alpha - \alpha^{*}) 
\end{eqnarray}
If the time derivative of the sliding surface is inserted into the aforementioned equation, it is obtained as follows:  
\begin{eqnarray}\label{eq_observer_Lyapunov_dd} 
 \dot{V} &=&  s \frac{l_{d} z}{l_{p} z}  \Big(\dddot{\hat{d}}_{BN} + \frac{l_{p} z}{l_{d} z}  \ddot{\hat{d}}_{BN} \Big )  +  l_{d} z  \frac{2\mid s \mid}{1 + l_{d} z} (\alpha - \alpha^{*}) 
\end{eqnarray}
It is obtained considering \eqref{eq_obs_error_3}
\begin{eqnarray}\label{eq_observer_Lyapunov_d1} 
 \dot{V} &=&  s \Big( l_{d} z \ddot{e}_{d}  + l_{p} z \dot{e}_{d}   \Big)  +  l_{d} z  \frac{2\mid s \mid}{1 + l_{d} z} (\alpha - \alpha^{*}) 
\end{eqnarray}
\eqref{eq_obs_error_self_3} is inserted into \eqref{eq_observer_Lyapunov_d1}, it is obtained as follows:
\begin{eqnarray}\label{eq_observer_Lyapunov_d2}
\dot{V} & = & s \Big(  l_{d} z \frac{ - (  l_{p} + l_{r} ) z \dot{e}_{d} - 2 \alpha sgn(s) + \ddot{d}} {1 + l_{d} z} + l_{p} z   \dot{e}_{d} \Big)  +   l_{d} z \frac{2\mid s \mid}{1 + l_{d} z} (\alpha - \alpha^{*})
\end{eqnarray}
If it is assumed that $\ddot{d}$ is upper bounded by $\ddot{d}^{*}$:
\begin{eqnarray}\label{eq_observer_Lyapunov_d3}
 \dot{V} &=&  \mid s \mid l_{d} z \frac{(- 2 \alpha + \ddot{d}^{*})}{1 + l_{d} z} + s \dot{e}_{d}  \Big(l_{d} z \frac{ -(  l_{p} + l_{r} ) z }{1 + l_d z}  + l_p z \Big) +  l_{d} z  \frac{2\mid s \mid}{1 + l_{d} z} (\alpha - \alpha^{*})  \\
 &=&  \mid s \mid  l_{d} z \frac{(- 2 \alpha^{*} + \ddot{d}^{*})}{1 + l_{d} z} + s  \dot{e}_{d}  \underbrace{ \Big( l_{d} z \frac{ -(  l_{p} + l_{r} ) z }{1 + l_d z} + l_pz \Big) }_{0}
\end{eqnarray}

As stated in Theorem \ref{theorem2}, if $l_{r}$ is equal to $\frac{l_{p}}{l_{d} z}$, i.e. $l_r = \frac{l_p}{l_d z}$, and the final value of the learning algorithm $\alpha^{*}$ is large enough, i.e. $\alpha^{*} >\ddot{d}^{*}$, then the time derivative of the Lyapunov function is negative, i.e. $\dot{V}<0$, so that the SLDO is stable.
\end{proof}

\begin{remark}\label{remark_stability}
The main advantage of the SLDO is to be able to prove the stability in case of not only time-invariant disturbances, such as BNDOs, but also time-varying disturbances.
\end{remark}

\section{Controller Design}\label{sec_controller}

The control objective is to find a control law so that the system states can track a desired trajectory. One of the most commonly used method for nonlinear systems is feedback linearization control (FLC). The traditional FLC method for nonlinear systems is formulated considering a second-order nonlinear system in \eqref{eq_nonlinearsystem} where there exists no disturbance:
\begin{eqnarray}\label{eq_FLC}
u= -b^{-1}(x) \Big( \ddot{x}_{d} + a (x) - k_{2} (\dot{x}_{d} - x_{2} ) - k_{1}(x_{d} - x_{1} ) \Big)
\end{eqnarray}
where the controller coefficients $k_{1}, k_{2}$ are positive, i.e. $k_{1}, k_{2}>0$.  If the control law in \eqref{eq_FLC} is applied to the system in \eqref{eq_nonlinearsystem}, the closed-loop error dynamics are obtained as follows:
\begin{eqnarray}\label{eq_FLC_closedloop_tra}
\ddot{e} + k_{2} \dot{e} + k_{1} e =  - z d(t)
\end{eqnarray}
where $e=x_{d}- x_{1}$ and $\dot{e}=\dot{x}_{d}- x_{2}$. 

\begin{lemma}\label{lemma_i2sstability}
\citep{khalil1996nonlinear}: If a nonlinear system $F(x,u)$ is input-to-state stable and the input satisfies $\displaystyle \lim_{ t \to \infty} u(t) = 0$, then the state satisfies $\displaystyle \lim_{ t \to \infty} x(t) = 0$.
\end{lemma}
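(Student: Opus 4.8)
The plan is to prove the standard ``converging-input, converging-state'' property of input-to-state stable (ISS) systems by unpacking the ISS definition and running an $\varepsilon$--$T$ argument with a restart of the time origin. Recall that ISS of $\dot{x}=F(x,u)$ furnishes a class-$\mathcal{KL}$ function $\beta$ and a class-$\mathcal{K}$ function $\gamma$ such that, for every initial condition and every bounded input,
\begin{equation}\label{eq_iss_estimate}
\| x(t) \| \le \beta\big( \| x(t_{0}) \|,\, t-t_{0} \big) + \gamma\Big( \sup_{t_{0} \le \tau \le t} \| u(\tau) \| \Big).
\end{equation}
First I would observe that since $u(t)\to 0$ it is in particular bounded, so \eqref{eq_iss_estimate} applies on $[0,\infty)$ and the trajectory exists and stays bounded.

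The central difficulty is that one cannot simply let $t\to\infty$ in the estimate anchored at $t_{0}=0$: the term $\gamma\big(\sup_{0\le\tau\le t}\|u(\tau)\|\big)$ retains the possibly large early values of $u$ and need not vanish. The key step is therefore to restart the clock at a late instant. Fixing $\varepsilon>0$, because $u(t)\to 0$ I would choose $t_{1}$ so large that the tail supremum $\sup_{\tau\ge t_{1}}\|u(\tau)\|$ is small enough, by continuity of $\gamma$ at the origin with $\gamma(0)=0$, to force $\gamma\big(\sup_{\tau\ge t_{1}}\|u(\tau)\|\big)<\varepsilon/2$. Applying \eqref{eq_iss_estimate} with initial time $t_{0}=t_{1}$ --- legitimate because the trajectory on $[t_{1},\infty)$ is itself a solution of the (time-invariant) system from $x(t_{1})$ driven by $u|_{[t_{1},\infty)}$, so by causality the bound holds from any starting instant --- gives, for $t\ge t_{1}$,
\begin{equation}\label{eq_iss_restart}
\| x(t) \| \le \beta\big( \| x(t_{1}) \|,\, t-t_{1} \big) + \gamma\Big( \sup_{t_{1}\le\tau\le t} \| u(\tau) \| \Big) \le \beta\big( \| x(t_{1}) \|,\, t-t_{1} \big) + \tfrac{\varepsilon}{2}.
\end{equation}

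With $t_{1}$ now fixed, $\| x(t_{1}) \|$ is a fixed finite number, and since $\beta(r,\cdot)$ is a class-$\mathcal{L}$ function decreasing to $0$, I would choose $t_{2}>t_{1}$ so that $\beta\big(\|x(t_{1})\|,\,t-t_{1}\big)<\varepsilon/2$ for all $t\ge t_{2}$. Combining with \eqref{eq_iss_restart} yields $\|x(t)\|<\varepsilon$ for every $t\ge t_{2}$, which is precisely $\lim_{t\to\infty}x(t)=0$. The only genuine obstacle is the restart argument itself: recognizing that the supremum of the input over the entire past must be replaced by the supremum over its tail, and justifying the time-shifted estimate through causality. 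Everything else is a routine two-term $\varepsilon$ splitting, and the conclusion also matches the application of interest, where the vanishing input is $-z\,d(t)$ acting through the closed-loop error dynamics \eqref{eq_FLC_closedloop_tra}.
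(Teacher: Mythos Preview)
Your argument is correct and is precisely the standard ``converging-input, converging-state'' proof one finds in Khalil: invoke the ISS bound, exploit causality to restart at a late time $t_{1}$ where the input tail is small, and then let the $\beta$ term decay. There is nothing to compare against, however, because the paper does not supply its own proof of this lemma: it is stated as a citation of \citep{khalil1996nonlinear} and used as a black box in the subsequent controller-stability discussion. So you have filled in what the paper omits, and done so along exactly the lines the cited reference would take.

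One small slip in your closing aside: the vanishing input in the paper's application is not $-z\,d(t)$ in \eqref{eq_FLC_closedloop_tra} (there $d(t)$ need not vanish), but rather $-z\,e_{d}(t)$ in \eqref{eq_FLC_closedloop_new}, where the SLDO forces $e_{d}\to 0$. This does not affect the proof of the lemma itself, only the remark about how it is used.
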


\begin{remark}\label{remark_flctra}
As seen in \eqref{eq_FLC_closedloop_tra}, if the disturbance $d(t)$ is different from zero, the closed-loop error dynamics cannot converge to zero in finite time. This shows that the traditional FLC is sensitive to disturbances.
\end{remark}

The FLC based on the SLDO by taking the estimated disturbance value into account is formulated as follows:
\begin{eqnarray}\label{eq_FLC_new}
u= -b^{-1}(x) \Big(\ddot{x}_{d} + a (x) - k_{2} (\dot{x}_{d} - x_{2} ) - k_{1}(x_{d} - x_{1}  ) + z \hat{d}_{SL} \Big)
\end{eqnarray}
If the control law in \eqref{eq_FLC_new} is applied to the system in \eqref{eq_nonlinearsystem}, the closed-loop error dynamics are obtained as follows:
\begin{eqnarray}\label{eq_FLC_closedloop_new}
\ddot{e} + k_{2} \dot{e} + k_{1} e = - z e_{d} 
\end{eqnarray}
where  $e_{d}=d(t)  - \hat{d}_{SL}(t)$. As stated in Theorem \ref{theorem2}, the disturbance error dynamics for the SLDO can converge to zero asymptotically. As stated in Lemma \ref{lemma_i2sstability}, if the disturbance error $e_{d}$ satisfies $\displaystyle \lim_{ t \to \infty} e_{d}(t) = 0$, then the system error satisfies $\displaystyle \lim_{ t \to \infty} e(t) = 0$. As a result, the closed-loop error dynamics of the system can converge to zero asymptotically in finite  time under the control law in \eqref{eq_FLC_new} if the controller coefficients $k_{1}$ and $k_{2}$ are positive, i.e. $k_{1}, k_{2} > 0$.

\begin{remark}\label{remark_flcnew}
 If there exists no disturbance, i.e. $d(t)=0$, then the estimated value of the disturbance in the control law \eqref{eq_FLC_new} will be zero, i.e. $\hat{d}_{SL}=0$. This results in the traditional FLC in \eqref{eq_FLC} so that it maintains the nominal performance the absence of disturbances.
\end{remark}

\section{Simulation Studies}\label{sec_simulation}

The following nonlinear system, i.e., chaotic Duffing oscillator, is considered for the simulation studies \citep{Hsu2012997}:
\begin{eqnarray}\label{eq_systems}
\dot{x}_{1}& = & x_{2}  \nonumber \\
\dot{x}_{2} & = & 1.1 x_{1} - 0.4 x_{2} - x_{1}^{3} +2.1 \cos{(1.8 t)}+ u + d
\end{eqnarray}
where $a(x)= 1.1 x_{1} - 0.4 x_{2} - x_{1}^{3} +2.1 \cos{(1.8 t)}$,  $b(x)=1$ and $z=[0,1]^{T}$ as can be seen from \eqref{eq_nonlinearsystem}.

The desired states values are defined as $\ddot{x}_{d}=\dot{x}_{d}=x_{d}=0$. The initial conditions on the states of the system and the controller coefficients are respectively selected as $x(0)=[1, -1]^{T}$ and $k_{1}=50$,  $k_{2}=25$. Since the disturbance coefficient vector in \eqref{eq_systems} is equal to $z=[0, 1]^{T}$, the proportional, derivative and robust gains for DOs must be positive, i.e., $l_{p}, l_{d}, l_{r} >0$. The proportional gain is selected as $l_{p}=[0, 3]^{T}$ while the derivative gain is selected as $l_{d}=[0, 1.2]^{T}$. As stated in Theorem \ref{theorem2}, since the robust gain must be equal to $\frac{l_{p}}{l_{d} z}$, i.e. $l_{r} = \frac{l_{p}}{l_{d} z}= 2.5$, the robust gain is selected as $l_{r}=[0, 2.5]^{T}$. The coefficient $\gamma_{\alpha}$ to adjust the learning rate $\alpha$ for the SLDO is  selected as $0.001$. The initial conditions on the learning rate $\alpha$ and parameter $q$ are set to $0.05$ and $0.5$, respectively. To benchmark different disturbance observers in the presence and absence of uncertainties, no disturbance is imposed on the system at the beginning, a step external disturbance $d = 3$ is imposed on the system at $t = 10$ second and a sinusoidal external disturbance $d=3sin(t)$ is imposed on the system at $t=20$ second as formulated below:
\begin{equation}\label{eq_aplieddisturbance}
  d(t) = \Bigg \{
    \begin{array}{rclcl}
     0 \leq &t& < 10 &  d=& 0 \\
   10 \leq &t& < 20 & d=& 3 \\
   20 \leq &t& < 30 & d=& 3 \sin(t) \\
 \end{array} 
\end{equation}

In simulation studies, the control performance of the FLC based on the SLDO is firstly compared with the traditional FLC and the FLC based on the BNDO. Then, the SLDO is analyzed under noisy conditions and compared with its type-1 counterpart. Throughout simulation studies, the sampling time is set to $0.001$ second while the number of membership functions are selected as $I = J = 3$. In the presence of plant uncertainties, the adaptation of the learning rate of sliding mode learning algorithm must be a robust adaptation law to avoid having infinite values. Therefore, a dead-zone has been proposed in literature to handle this problem. In this paper, if the sliding surface is smaller than the dead-zone parameter $\epsilon=0.05$, i.e., $s < \epsilon$, then the learning rate $\alpha$ is not updated.

The states responses $x_{1}, x_{2}$ are shown in Figs. \ref{fig_x1} and \ref{fig_x2}. Firstly, the FLC controller can control the system without steady-state error while there exists no disturbance on the system. However, after the disturbances are imposed on the system, it is observed that it is not robust against any external disturbance and gives steady-state error as seen in Fig. \ref{fig_x1} and stated in Remark \ref{remark_flctra}. Secondly, the FLC based on the BNDO can control the system without any steady-state error while there exist no disturbance and a time-invariant disturbance. However, it is seen that it is not robust against a time-varying disturbance as stated in Remark \ref{remark_BNDO}. Thirdly, the FLC based on the SLDO can control system without steady-state error and it is robust against time-invariant and time-varying disturbances. Moreover, the FLCs based on the BNDO and SLDO maintain the nominal performance while there exists no disturbance between t=0-10 seconds as stated in Remark \ref{remark_flcnew}.

The actual and estimated disturbances are shown in Fig. \ref{fig_d}. As can be observed, the SLDO can estimate time-varying disturbances while BNDO is only able to estimate only time-invariant disturbances as stated in Remark \ref{remark_stability}. This fact results in the robust control performance of the FLC based on the SLDO against time-varying uncertainties. Thanks to learning process by the feedback-error learning algorithm, the T2NFS takes the overall estimation signal while the conventional estimation signal converges to zero in finite time as shown in Fig. \ref{fig_estimation_signals}. Inasmuch as the total generated estimation signal by the feedback-error learning structure is equal to $\tau_{c}-\tau_{n}$, the output of the T2NFS $\tau_{n}$ is multiplied by $-1$ in Fig. \ref{fig_estimation_signals} not to cause the reader to become perplexed. The T2NFS becomes the leading estimator after a short time period. The output of the conventional estimation law $\tau_{c}$ becomes nonzero only during the time intervals when the T2NFS is learning.

The adaptation of the learning rate $\alpha$ is shown in Fig \ref{fig_alpha}. As seen, the initial condition on the learning rate $\alpha$ is set to $0.05$ and the learning rate is constant while the output of the conventional estimation law $\tau_{c}$ is equal to zero due to the fact that learning is not required. When the disturbances are imposed on the system, the learning rate is increasing for a short time period till $\tau_{c}$ becomes zero. Moreover, the adaptation of the parameter $q$ is shown in Fig. \ref{fig_q}. The initial condition on the parameter $q$ is set to $0.5$, which is the general case. Thanks to the adaptation rule in \eqref{eq_q}, the proportion of the upper and lower membership functions is adjusted throughout the simulations. 

\begin{figure}[h!]
\centering
\subfigure[ ]{
\includegraphics[width=3in]{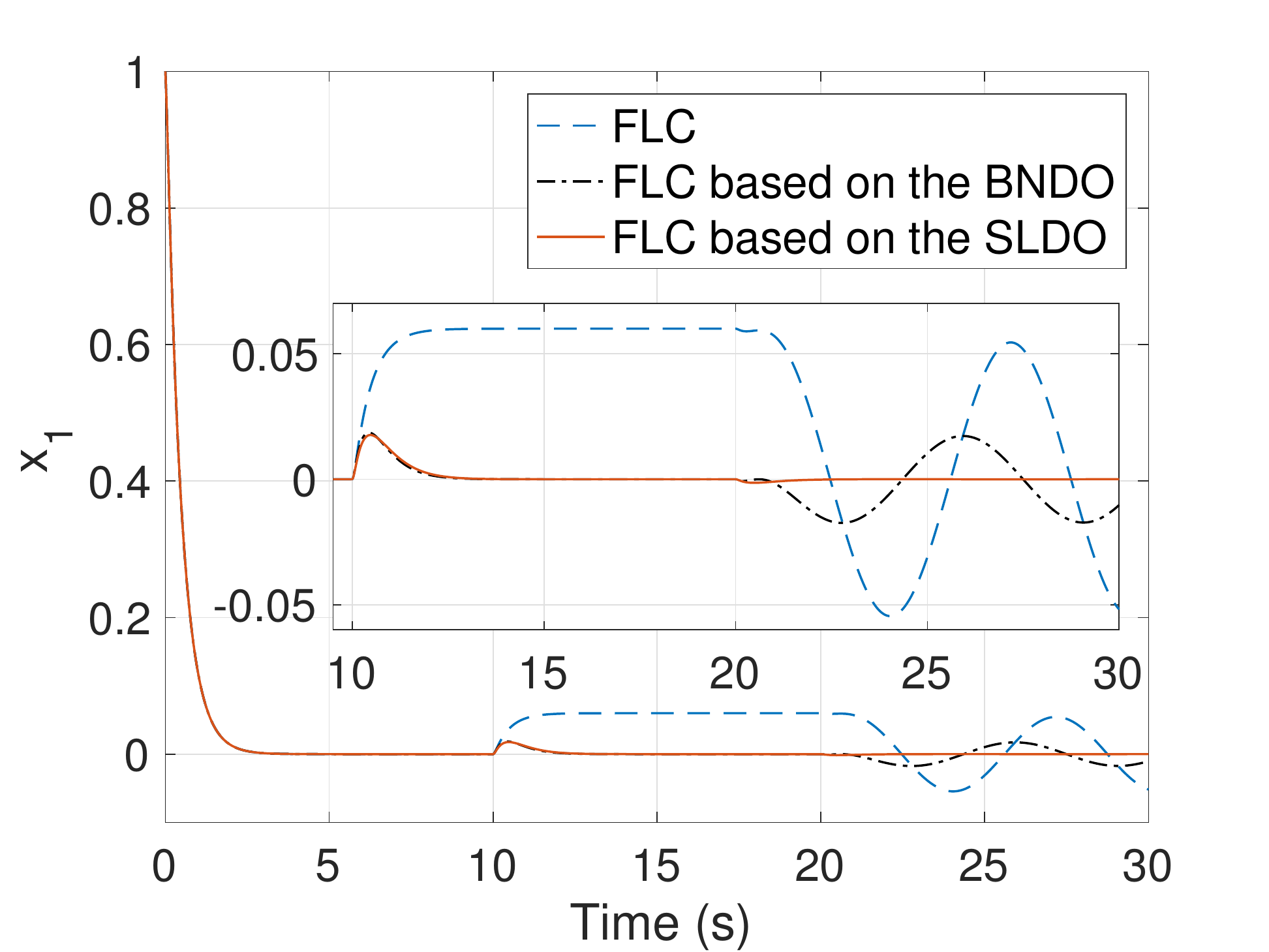}
\label{fig_x1}
}
\subfigure[ ]{
\includegraphics[width=3in]{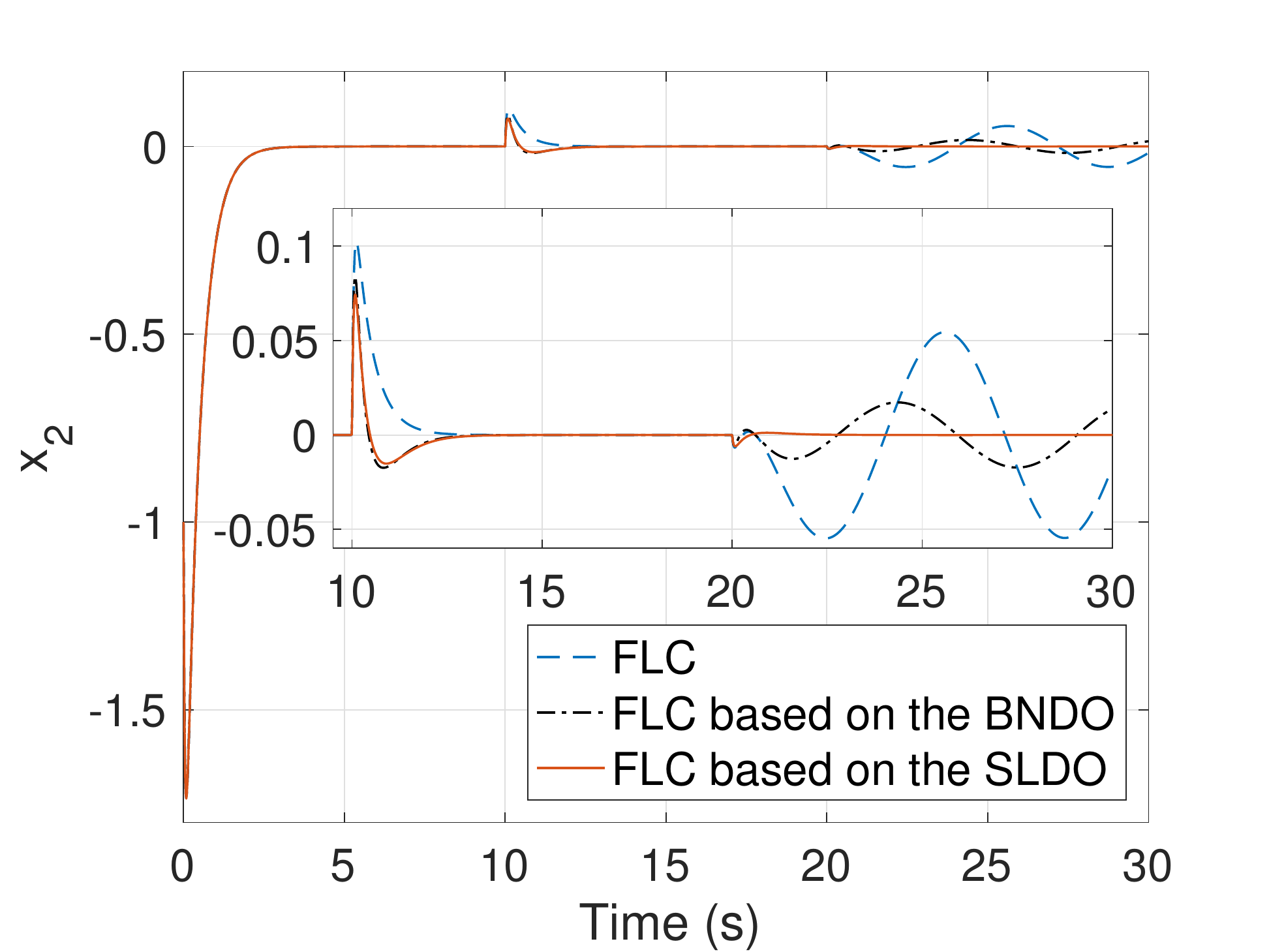}
\label{fig_x2}
}
\subfigure[ ]{
\includegraphics[width=3in]{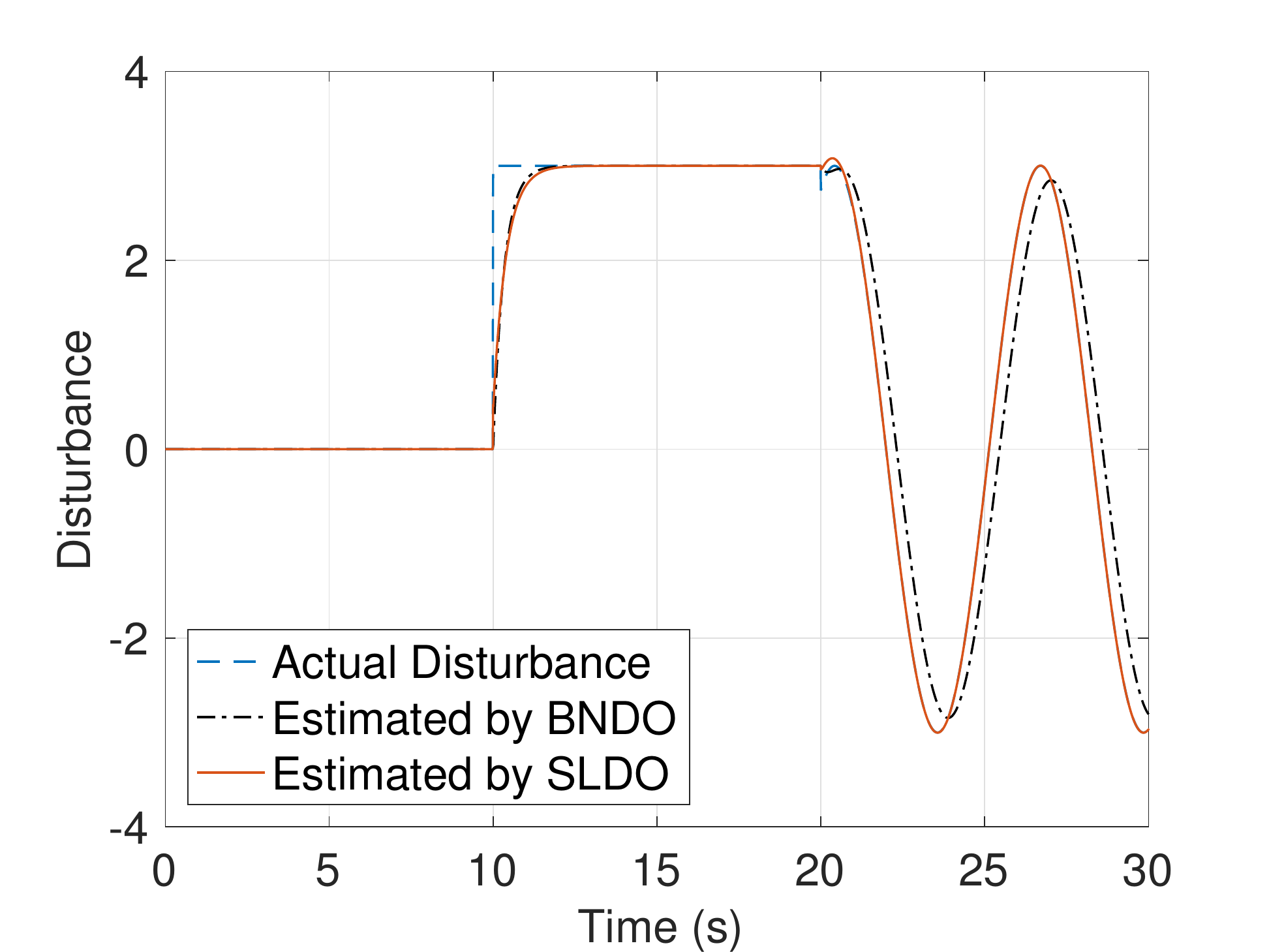}
\label{fig_d}
}
\subfigure[ ]{
\includegraphics[width=3in]{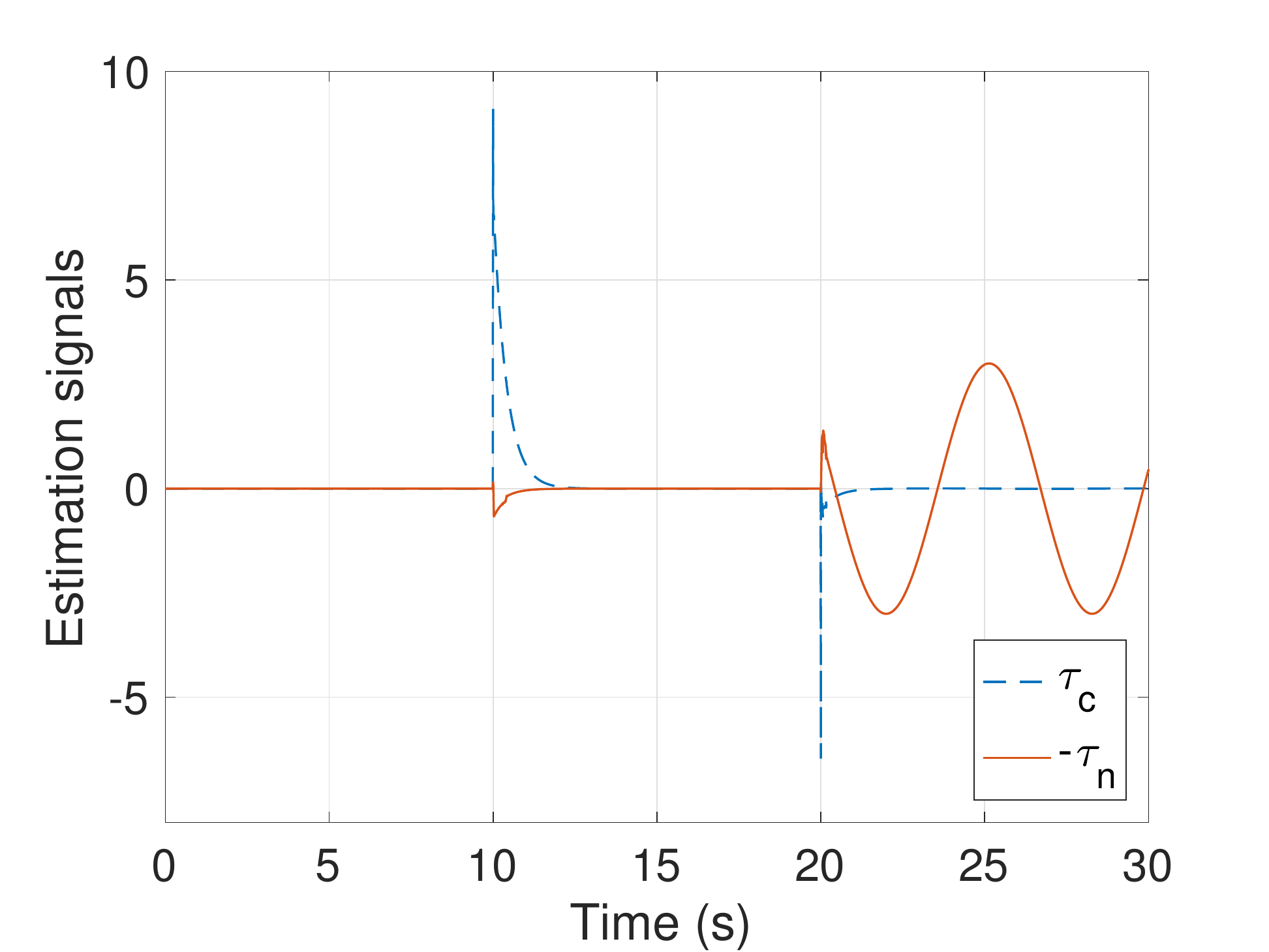}
\label{fig_estimation_signals}
}
\subfigure[ ]{
\includegraphics[width=3in]{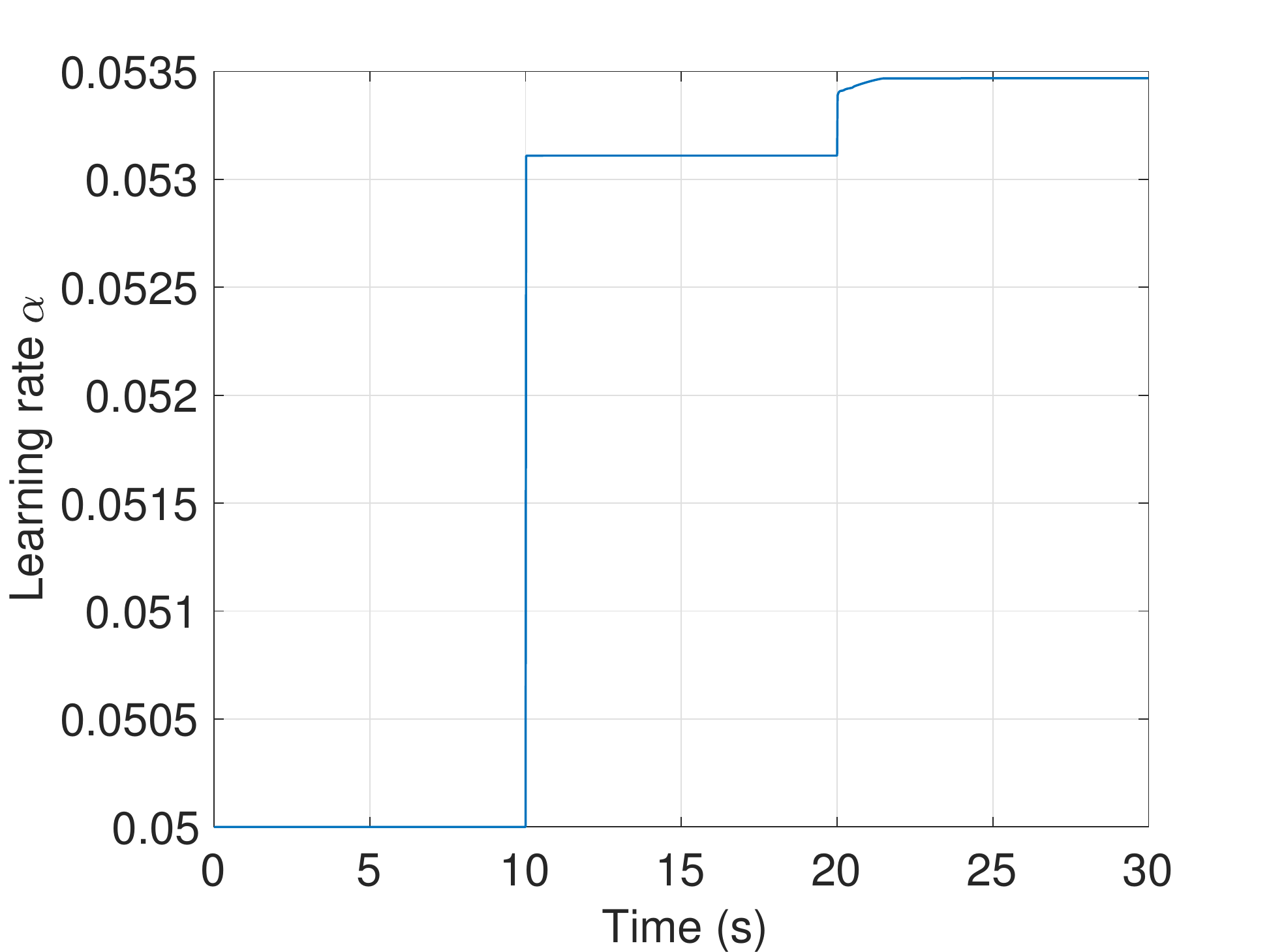}
\label{fig_alpha}
}
\subfigure[ ]{
\includegraphics[width=3in]{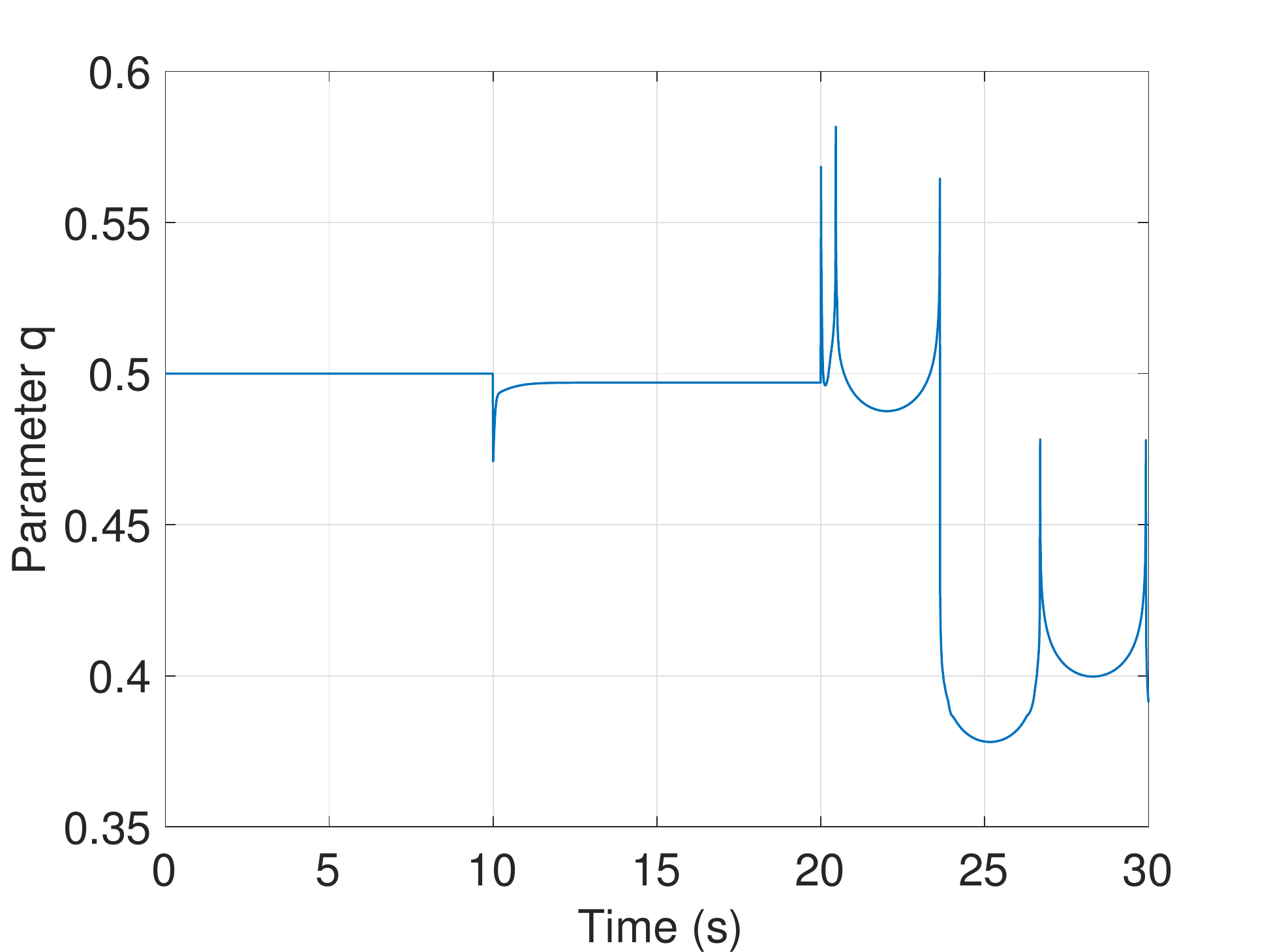}
\label{fig_q}

}
\caption[Optional caption for list of figures]{ (a) Responses of state $x_{1}$(b) Responses of state $x_{2}$ (c) True and estimated values of the disturbance (d) Estimation signals (e) Learning parameter $\alpha$ (f) Parameter $q$}
\label{sensors}
\end{figure}

Type-2 fuzzy membership functions are used in the proposed estimation structure and it is possible to downgrade them to type-1 counterparts by equalizing the upper and lower values of parameters in \eqref{eq_c_1i_lower}-\eqref{eq_sigma_2j_upper}. In literature, it is claimed that the type-2 fuzzy logic system gives better performance than its type-1 counterpart in the presence of noise and uncertainty in the system. The initial conditions on the states of the system are set to $x(0)=[1, 1]^{T}$. In order to compare the performance of T2NFS with its type-1 counterpart under noisy conditions, the actual disturbance $d$ with different noise levels $SNR$, which are equal to $20$ $dB$, $40$ $dB$ and $60$ $dB$, is applied the system. The mean squared errors for the different noise levels are given in Table \ref{tab_mse}. As seen from this table, the T2NFS gives less error than the type-1 neuro-fuzzy structure (T1NFS) and the performance of T2NFS is more remarkable while the noise level is increasing.
Figure \ref{fig_absdiserror} shows the absolute disturbance error responses with a noise level of $20$ $dB$ for T1NFS and T2NFS. As can be observed, T2NFS gives less disturbance error when compared to its type-1 counterpart. Moreover, the system states under noisy condition are shown in Figs. \ref{fig_x1noise} and \ref{fig_x2noise}. The FLC with the SLDO based on T2NFS exhibits better control performance. These results verify previous results seen in \citep{Mendel2000913,Hsiao20081696,5772027}. Type-2 fuzzy systems have more degrees of freedom so that they have capability of dealing with noisy measurements and uncertainties in the system more effectively.

\begin{figure}[h!]
\centering
\subfigure[ ]{
\includegraphics[width=3in]{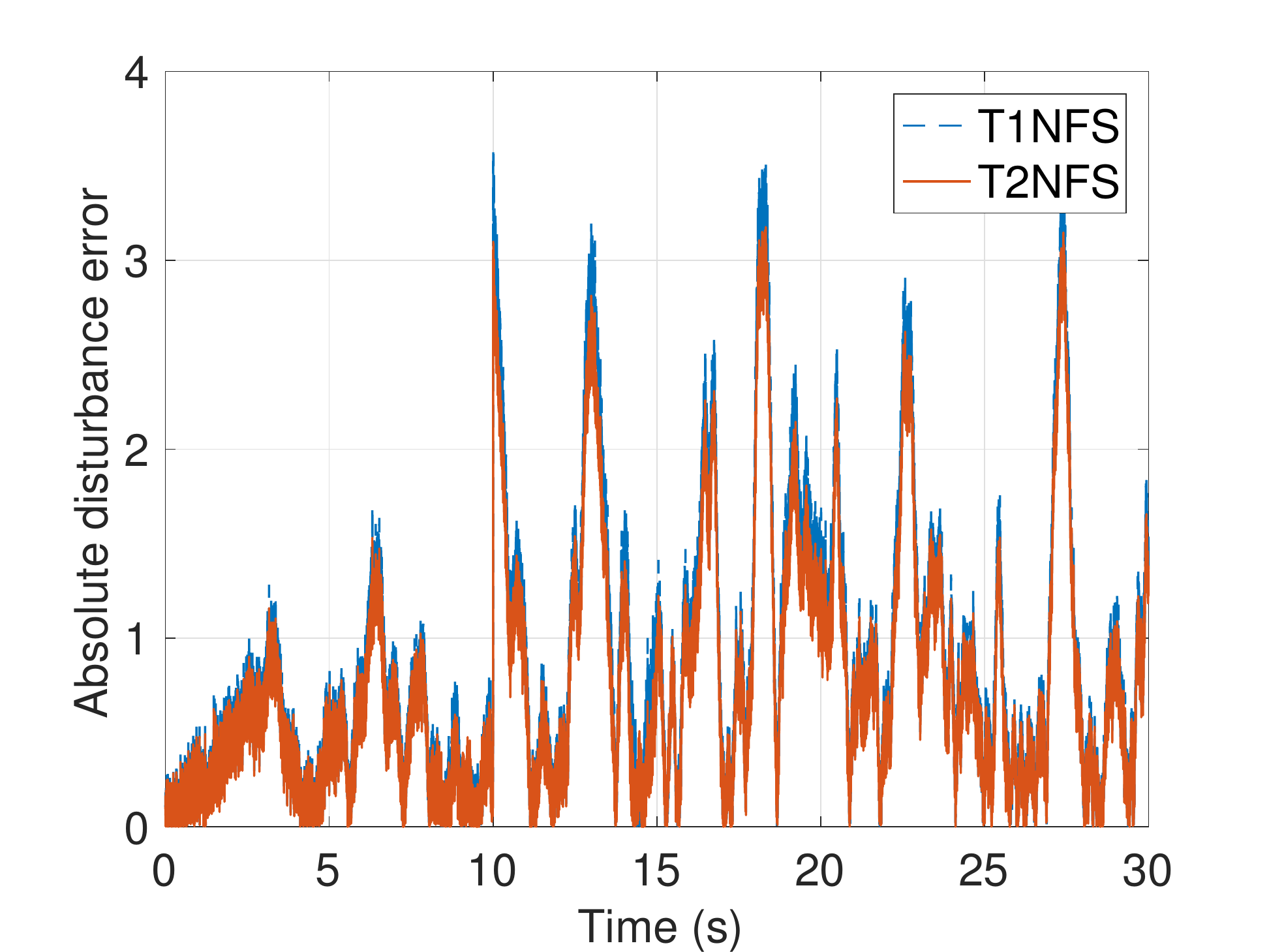}
\label{fig_absdiserror}
}
\subfigure[ ]{
\includegraphics[width=3in]{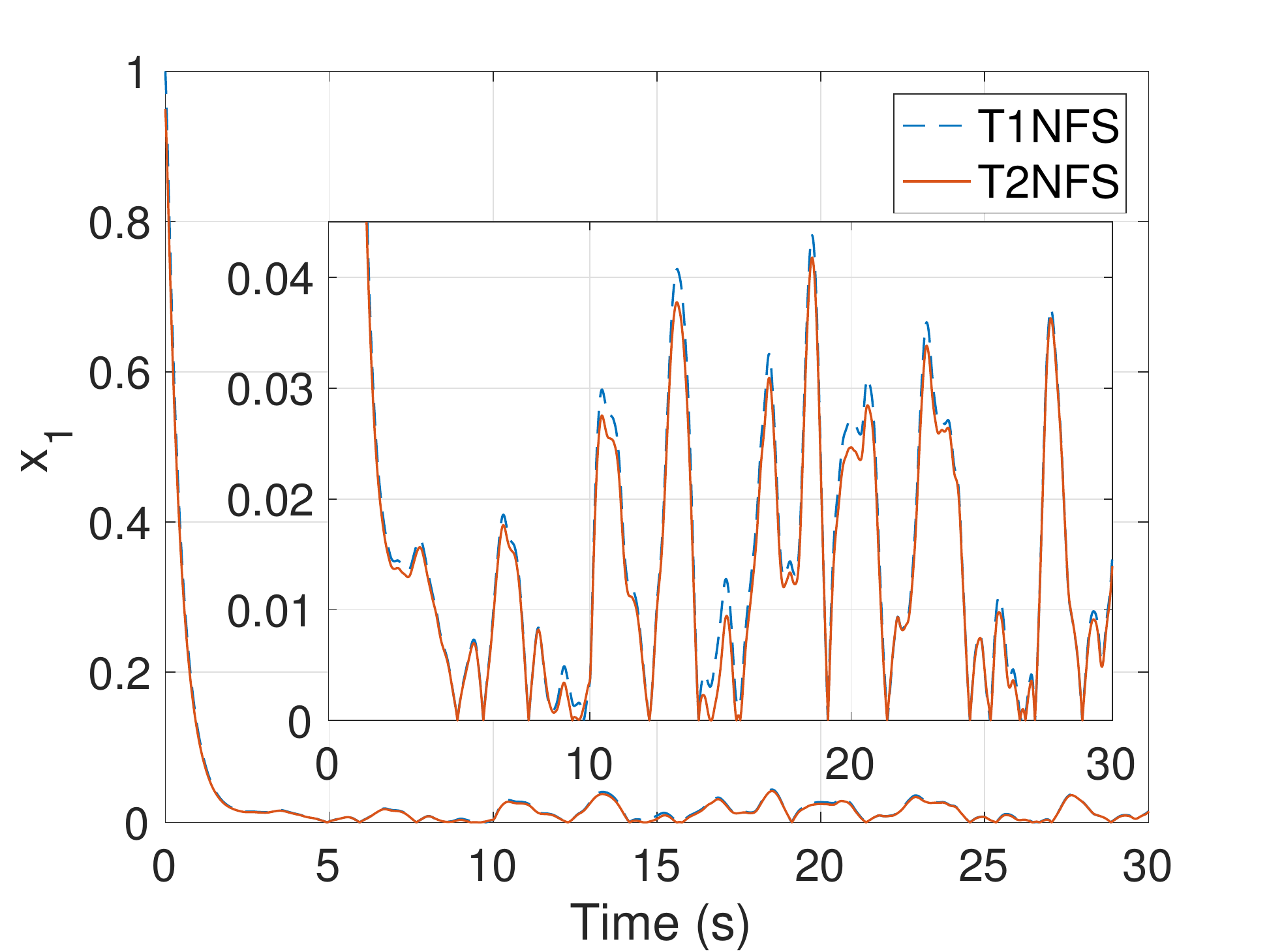}
\label{fig_x1noise}
}
\subfigure[ ]{
\includegraphics[width=3in]{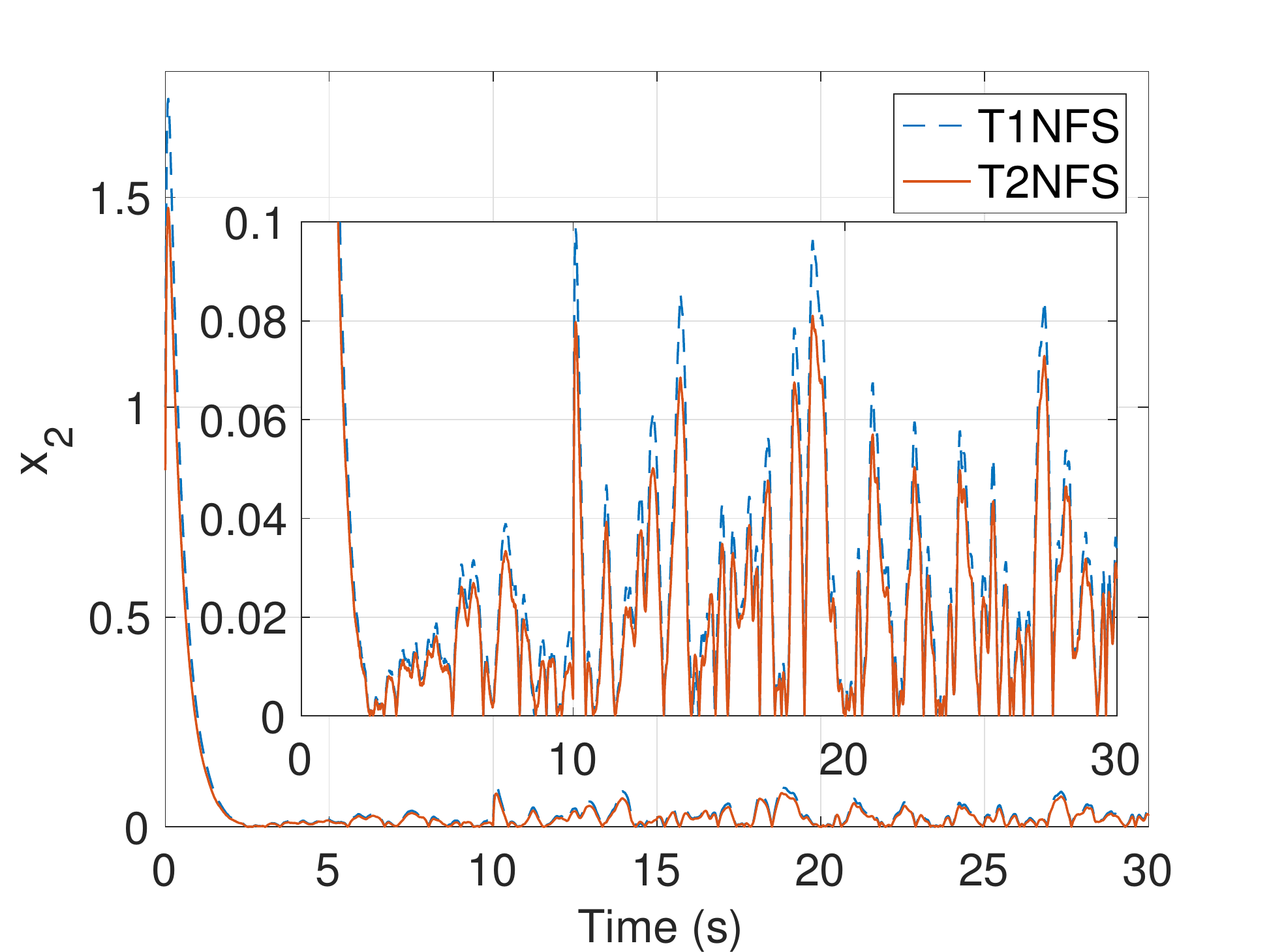}
\label{fig_x2noise}
}
\caption[Optional caption for list of figures]{ (a) Disturbance error when the actual disturbance with a noise level $SNR=20$ $db$  (b) Responses of state $x_{1}$ when the actual disturbance with a noise level $SNR=20$ $db$ (c) Responses of state $x_{2}$ when the actual disturbance with a noise level $SNR=20$ $db$ }
\label{sensors}
\end{figure}

\begin{table}[h!]
\centering
\caption{Mean Squared Error.}\label{tab_mse}
\begin{tabular}{lccc}
  \hline
   & 20 (dB)& 40 (dB)  & 60 (dB)  \\
    \hline
      \hline
  T1NFS & 1.3112 & 0.0288 & 0.0034 \\
  T2NFS & 1.2352 & 0.0276 & 0.0033  \\
  \hline
    \hline
  Decrease in Disturbance Error & 6.14 \% & 4.35 \% & 3.03 \% \\
      \hline
\end{tabular}
\end{table}

The simulations are executed on the computer, which is equipped with $3.1$ GHz Intel Core i$7-5557U$ CPU and $16$ GB of RAM. It is to be noted that the sampling and simulation times are set to $0.001$ and $30$ seconds in the sense that total number of sample is equal to $30000$. The total required computation time is calculated as $5.531993$ seconds so that the required computation time for each sampling time instant is around $0.18$ millisecond. 
The required computation time of the sliding mode learning algorithm is significantly lower than the one of the other methods, such as gradient descent, Levenberg-Marquardt, particle swarm optimization and extended Kalman filter \citep{erkan2015identt2fnn}. The reason is that the sliding mode learning algorithm does not contain any high-order matrices, matrix manipulations or calculations of the partial derivatives. Moreover, particle filter becomes infeasible in real-time due to large number of states and moving horizon estimation requires solving nonlinear optimization problem which results in large computation times \citep{Daum2005}. Recent developed numerical methods have reduced required computation times for solving nonlinear optimization problems around 5 milliseconds. However, even this is 25 times more than our proposed algorithms in the paper \citep{erkancenmpc}. Therefore, it can be concluded that the proposed method in this paper is more practical in real-time applications.

\section{Conclusion}\label{sec_conc}

A novel SLDO is developed by benefiting from the T2NFS with online sliding mode learning algorithm in feedback-error learning scheme. In addition to the stability of the SMC-theory learning  algorithm, the stability of the SLDO by taking the system dynamics into account and the stability of the FLC based on the SLDO are proven by using separation principle. The simulations results show that the traditional FLC is sensitive to disturbances and the FLC based on the BNDO is only robust to time-invariant disturbance while the FLC based on the SLDO is robust against any kind of disturbances by performing precise online estimation of the immeasurable time-varying disturbances. Moreover, the FLC based on the SLDO maintains the nominal control performance in the absence of uncertainties. Thanks to online sliding mode learning algorithm, the parameters of the T2NFS are spontaneously adjusted to learn disturbances and this makes systems robust to cope with uncertainties. Moreover, the developed SMC-theory based learning algorithm requires significantly less computation time than the traditional ones, e.g. gradient descent and evolutionary training algorithms, so that it is more practical in real-time applications.

\appendix
\section{ Calculation of $\dot{\tau}_{n}$}

By taking the time derivative of \eqref{eq_mu1_lower}-\eqref{eq_mu2_upper}, the following equations are obtained as:
\begin{eqnarray}\label{dot_mu}
\dot{\underline{\mu}}_{1i}(\xi_1) &=& -2 \underline{N}_{1i} \dot{\underline{N}}_{1i} \underline{\mu}_{1i}(\xi_1) \nonumber \\
\dot{\overline{\mu}}_{1i}(\xi_1) &=& -2 \overline{N}_{1i} \dot{\overline{N}}_{1i} \overline{\mu}_{1i}(\xi_1)  \nonumber \\
\dot{\underline{\mu}}_{2j}(\xi_2) &=& -2 \underline{N}_{2j} \dot{\underline{N}}_{2j} \underline{\mu}_{2j}(\xi_2)  \nonumber \\
\dot{\overline{\mu}}_{2j}(\xi_2) &=& -2 \overline{N}_{2j} \dot{\overline{N}}_{2j} \overline{\mu}_{2j}(\xi_2)
\end{eqnarray}
where 
\begin{eqnarray}\label{N_dot_N}
\underline{N}_{1i}=\Big(\frac{\xi_1-\underline{c}_{1i}}{\underline{\sigma}_{1i}}\Big), \quad \dot{\underline{N}}_{1i}= \frac{(\dot{\xi_1} - \dot{\underline{c}}_{{1i}})\underline{\sigma}_{1i}-(\xi_1 - \underline{c}_{1i})\dot{\underline{\sigma}}_{1i}}{\underline{\sigma}^2 _{1i}} \nonumber \\
\underline{N}_{2j}=\Big(\frac{\xi_2-\underline{c}_{2j}}{\underline{\sigma}_{2j}}\Big),   \quad \dot{\underline{N}}_{2j}= \frac{(\dot{\xi_2} - \dot{\underline{c}}_{{2i}})\underline{\sigma}_{2i}-(\xi_2 - \underline{c}_{2i})\dot{\underline{\sigma}}_{2i}}{\underline{\sigma}^2 _{2i}}  \nonumber \\
\overline{N}_{1i}=\Big(\frac{\xi_1-\overline{c}_{1i}}{\overline{\sigma}_{1i}}\Big), \quad \dot{\overline{N}}_{1i}= \frac{(\dot{\xi_1} - \dot{\overline{c}}_{{1i}})\overline{\sigma}_{1i}-(\xi_1 - \overline{c}_{1i})\dot{\overline{\sigma}}_{1i}}{\overline{\sigma}^2 _{1i}}  \nonumber \\
\overline{N}_{2j}=\Big(\frac{\xi_2-\overline{c}_{2j}}{\overline{\sigma}_{2j}}\Big) \quad \dot{\overline{N}}_{2j}= \frac{(\dot{\xi_2} - \dot{\overline{c}}_{{2i}})\overline{\sigma}_{2i}-(\xi_2 - \overline{c}_{2i})\dot{\overline{\sigma}}_{2i}}{\overline{\sigma}^2 _{2i}}
\end{eqnarray}

It is obtained from \eqref{N_dot_N} 
\begin{equation}\label{up8}
\underline{N}_{1i}\dot{\underline{N}}_{1i}=\underline{N}_{2j}\dot{\underline{N}}_{2j}=\overline{N}_{1i}\dot{\overline{N}}_{1i}=\overline{N}_{2j}\dot{\overline{N}}_{2j}= \alpha sgn{(s)}
\end{equation}

By taking the time derivative of \eqref{eq_wij_lower_upper_normalized}, the following equations are obtained as follows:
\begin{eqnarray}
\dot{\widetilde{\underline{w}}}_{ij}  = \frac{\Big(\underline{\mu}_{1i}(\xi_1) \underline{\mu}_{2j}(\xi_2)\Big)'\Big(\sum_{i=1}^{I}\sum_{j=1}^{J}\underline{w}_{ij}\Big) - \underline{w}_{ij}\Big(\sum_{i=1}^{I}\sum_{j=1}^{J}\underline{\mu}_{1i}(\xi_1)   \underline{\mu}_{2j}(\xi_2)\Big)'}{(\sum_{i=1}^{I}\sum_{j=1}^{J}\underline{w}_{ij})^2} 
\end{eqnarray}

Since $\widetilde{\underline{w}}_{ij} = \frac{\underline{w}_{ij}}{\sum_{i=1}^{I}\sum_{j=1}^{J}\underline{w}_{ij}}$,
\begin{eqnarray}\label{dotwij_lower_normalized}
\dot{\widetilde{\underline{w}}}_{ij} & = & \frac{\Big(\dot{\underline{\mu}}_{1i}(\xi_1) \underline{\mu}_{2j}(\xi_2)+ \underline{\mu}_{1i}(\xi_1) \dot{\underline{\mu}}_{2j}(\xi_2)\Big)}{\sum_{i=1}^{I}\sum_{j=1}^{J}\underline{w}_{ij}} \nonumber \\
&& -\frac{\widetilde{\underline{w}}_{ij} \bigg(\sum_{i=1}^{I}\sum_{j=1}^{J} \Big(\dot{\underline{\mu}}_{1i}(\xi_1) \underline{\mu}_{2j}(\xi_2)+ \underline{\mu}_{1i}(\xi_1) \dot{\underline{\mu}}_{2j}(\xi_2)\Big)\bigg)}{\sum_{i=1}^{I}\sum_{j=1}^{J}\underline{w}_{ij}} \nonumber \\
& = &  \frac{ -2 \underline{\mu}_{1i}(\xi_1) \underline{\mu}_{2j}(\xi_2) + 2 \widetilde{\underline{w}}_{ij} \sum_{i=1}^{I}\sum_{j=1}^{J}  \underline{\mu}_{1i}(\xi_1)\underline{\mu}_{2j}(\xi_2)}{\sum_{i=1}^{I}\sum_{j=1}^{J}\underline{w}_{ij}}  \Big(\underline{N}_{1i} \dot{\underline{N}}_{1i} + \underline{N}_{2j} \dot{\underline{N}}_{2j} \Big)
\nonumber \\
&=& -\widetilde{\underline{w}}_{ij} \dot{\underline{K}}_{ij} + \widetilde{\underline{w}}_{ij} \sum_{i=1}^{I}\sum_{j=1}^{J}\widetilde{\underline{w}}_{ij} \dot{\underline{K}}_{ij}
\end{eqnarray}
where
\begin{equation*}
\dot{\underline{K}}_{ij} = 2 \Big(\underline{N}_{1i} \dot{\underline{N}}_{1i}+\underline{N}_{2j}\dot{\underline{N}}_{2j} \Big) = 4 \alpha sgn{(s)}
\end{equation*}

Similarly, it is readily obtained that:
\begin{equation}\label{dotwij_upper_normalized}
\dot{\widetilde{\overline{w}}}_{ij} = -\widetilde{\overline{w}}_{ij}\dot{\overline{K}}_{ij} + \widetilde{\overline{w}}_{ij} \sum_{i=1}^{I}\sum_{j=1}^{J}\widetilde{\overline{w}}_{ij} \dot{\overline{K}}_{ij}
\end{equation}
where
\begin{equation*}
\dot{\overline{K}}_{ij} = 2 \Big(\overline{N}_{1i}\dot{\overline{N}}_{1i}+\overline{N}_{2j}\dot{\overline{N}}_{2j}\Big) = 4 \alpha sgn{(s)}
\end{equation*}

The time derivative of \eqref{eq_taun} is obtained to find $\dot{\tau}_n$ as follows:
\begin{eqnarray}
\dot{\tau}_n & = & q\sum_{i=1}^{I}\sum_{j=1}^{J}(\dot{f}_{ij} \widetilde{\underline{w}}_{ij}+ f_{ij}\dot{\widetilde{\underline{w}}}_{ij}) +(1-q)\sum_{i=1}^{I}\sum_{j=1}^{J}(\dot{f}_{ij}\widetilde{\overline{w}}_{ij} + f_{ij}\dot{\widetilde{\overline{w}}}_{ij})  \nonumber \\
& & +\dot{q} \sum_{i=1}^{I}\sum_{j=1}^{J}f_{ij}\widetilde{\underline{w}}_{ij}
-\dot{q} \sum_{i=1}^{I}\sum_{j=1}^{J}f_{ij}\widetilde{\overline{w}}_{ij}
\end{eqnarray}

If \eqref{dotwij_lower_normalized} and \eqref{dotwij_upper_normalized} are inserted into the aforementioned equation:
\begin{eqnarray}\label{dotVc2}
\dot{\tau}_n & = & q\sum_{i=1}^{I}\sum_{j=1}^{J}\bigg(\Big(-\widetilde{\underline{w}}_{ij} \dot{\underline{K}}_{ij} +\widetilde{\underline{w}}_{ij} \sum_{i=1}^{I}\sum_{j=1}^{J}\widetilde{\underline{w}}_{ij} \dot{\underline{K}}_{ij} \Big)f_{ij} +\widetilde{\underline{w}}_{ij}\dot{f}_{ij}\bigg) \nonumber \\
&&+(1-q)\sum_{i=1}^{I}\sum_{j=1}^{J}\bigg(\Big( -\widetilde{\overline{w}}_{ij} \dot{\overline{K}}_{ij} + \widetilde{\overline{w}}_{ij} \sum_{i=1}^{I}\sum_{j=1}^{J}\widetilde{\overline{w}}_{ij} \dot{\overline{K}}_{ij} \Big)f_{ij} +\widetilde{\overline{w}}_{ij}\dot{f}_{ij}\bigg) \nonumber \\
&& +\dot{q} \sum_{i=1}^{I}\sum_{j=1}^{J}f_{ij} (\widetilde{\underline{w}}_{ij} - \widetilde{\overline{w}}_{ij} )   \nonumber \\
& = & 4 \alpha sgn{(s)} \Bigg(q \sum_{i=1}^{I}\sum_{j=1}^{J}\bigg(\Big(-\widetilde{\underline{w}}_{ij}   + \widetilde{\underline{w}}_{ij} \sum_{i=1}^{I}\sum_{j=1}^{J}\widetilde{\underline{w}}_{ij}  \Big)f_{ij}  +\widetilde{\underline{w}}_{ij}\dot{f}_{ij}\bigg) \nonumber \\
&& +(1-q)\sum_{i=1}^{I}\sum_{j=1}^{J}\bigg(\Big( -\widetilde{\overline{w}}_{ij}  + \widetilde{\overline{w}}_{ij} \sum_{i=1}^{I}\sum_{j=1}^{J}\widetilde{\overline{w}}_{ij} \Big)f_{ij} +\widetilde{\overline{w}}_{ij}\dot{f}_{ij}\bigg) \Bigg) \nonumber \\
&& +\dot{q} \sum_{i=1}^{I}\sum_{j=1}^{J}f_{ij} (\widetilde{\underline{w}}_{ij} - \widetilde{\overline{w}}_{ij} ) 
\end{eqnarray}

Since $\sum_{i=1}^{I}\sum_{j=1}^{J}\widetilde{\overline{w}}_{ij} =1$ and $\sum_{i=1}^{I}\sum_{j=1}^{J}\widetilde{\underline{w}}_{ij} =1$, the aforementioned equation becomes by using \eqref{eq_f_ij} and \eqref{eq_q} as follows:
\begin{eqnarray}\label{dotVc4}
\dot{\tau}_n & = & q \sum_{i=1}^{I}\sum_{j=1}^{J}\widetilde{\underline{w}}_{ij} \dot{f}_{ij} + (1-q)\sum_{i=1}^{I}\sum_{j=1}^{J}\widetilde{\overline{w}}_{ij} \dot{f}_{ij}  +\dot{q} \sum_{i=1}^{I}\sum_{j=1}^{J}f_{ij} (\widetilde{\underline{w}}_{ij} - \widetilde{\overline{w}}_{ij} )  \nonumber \\
& = &- 2 \alpha sgn{(s)}
\end{eqnarray}

\section*{Acknowledgement}
The information, data, or work presented herein was funded in part by the Advanced Research Projects Agency-Energy (ARPA-E), U.S. Department of Energy, under Award Number DE-AR0000598.

\bibliographystyle{elsarticle-harv}
\bibliography{reference}

\end{document}